		\newcounter{theorem_c} 
		\numberwithin{theorem_c}{section} 
		\numberwithin{equation}{section} 
		\theoremstyle{plain} 
		\newtheorem{theorem}[theorem_c]{Theorem}
		\newtheoremstyle{exampstyle}
		  {2mm} 
		  {2mm} 
		  {\itshape} 
		  {} 
		  {\bfseries} 
		  {.} 
		  {.5em} 
		  {} 
		\theoremstyle{exampstyle}
		\newtheorem{remark}[theorem_c]{Remark}
	\newcommand{\inlineQuote}[1]{\textquotedblleft #1\textquotedblright} 
	\newcommand{\goodchi}{\protect\raisebox{2pt}{$\chi$}} 
	\newcommand{\naturals}{\mathbb{N}} 
	\newcommand{\integers}{\mathbb{Z}} 
	\newcommand{\circleGroup}{\mathbb{T}} 
	\newcommand{\torusGroup}[1]{\circleGroup^{#1}} 
	\newcommand{\reals}{\mathbb{R}} 
	\newcommand{\complexs}{\mathbb{C}} 
	\newcommand{\integersMod}[1]{\mathbb{Z}_{#1}} 
	\newcommand{\restrict}[2]{\left. #1 \right\vert_{#2}} 
	\newcommand{\nonstd}[1]{\,^\star #1}
	\newcommand{\starNaturals}{\nonstd{\naturals}} 
	\newcommand{\starIntegers}{\nonstd{\integers}} 
	\newcommand{\starComplexs}{\nonstd{\complexs}} 
	\newcommand{\starReals}{\nonstd{\reals}} 
	\newcommand{\suchthat}[2]{\left\{#1 \: \middle\vert \: #2\right\}} 
	\newcommand{\stdpartSym}{\operatorname{st}}
	\newcommand{\stdpart}[1]{\stdpartSym(#1)}
	\newcommand{\truncate}[1]{\bar{#1}}
	\newcommand{\liftSym}[1]{\operatorname{lift}_{#1}}
	\newcommand{\lift}[2]{\liftSym{#2}[#1]}
	\newcommand{\torusAlgebra}[1]{{\SpaceH_{\,\torusGroup{#1}}}}
	\newcommand{\integersAlgebra}[1]{{\SpaceH_{\,\integers^{#1}}}}
	\newcommand{\realsAlgebra}[1]{{\SpaceH_{\,\reals^{#1}}}}
	\newcommand{\starIntegersMod}[1]{{\nonstd{\integersMod{#1}}}}
	\newcommand{\starIntegersModPow}[2]{{\nonstd{\integersMod{#1}^{#2}}}}
		\newcommand{\ket}[1]{\vert #1 \rangle} 
		\newcommand{\bra}[1]{\langle #1 \vert} 
		\newcommand{\braket}[2]{\langle #1 \vert #2 \rangle} 
		\newcommand{\LtwoSym}{\operatorname{L}^2} 
		\newcommand{\Ltwo}[1]{\LtwoSym[#1]} 
		\newcommand{\SpaceH}{\mathcal{H}} 
		\newcommand{\SpaceG}{\mathcal{G}}
		\newcommand{\SpaceK}{\mathcal{K}}
		\newcommand{\isom}{\cong} 
		\newcommand{\epim}{\twoheadrightarrow} 
		\newcommand{\id}[1]{id_{#1}} 
		\newcommand{\Hom}[3]{\operatorname{Hom}_{\,#1}\left[#2,#3\right]} 
		\newcommand{\HilbCategory}{\operatorname{Hilb}} 
		\newcommand{\sHilbCategory}{\operatorname{sHilb}} 
		\newcommand{\fHilbCategory}{\operatorname{fHilb}} 
		\newcommand{\fdHilbCategory}{\fHilbCategory} 
		\newcommand{\starHilbCategory}{^\star\!\HilbCategory} 
		\newcommand{\starHilbCategoryNearStd}{\starHilbCategory^{(std)}}
		\newcommand{\starsHilbCategory}{\nonstd{\operatorname{\sHilbCategory}}}
		\newcommand{\starsHilbCategoryNearStd}{\starsHilbCategory^{(std)}}
	\newcommand{\hbox{\input{symbols/XdotSym.tex}}\!}{\hbox{\input{symbols/XdotSym.tex}}\!} 
	\newcommand{\hbox{\input{symbols/timemultSym.tex}}\!}{\hbox{\input{symbols/timemultSym.tex}}\!} 
	\newcommand{\hbox{\input{symbols/timecomultSym.tex}}\!}{\hbox{\input{symbols/timecomultSym.tex}}\!} 
	\newcommand{\hbox{\input{symbols/timeunitSym.tex}}\!}{\hbox{\input{symbols/timeunitSym.tex}}\!} 
	\newcommand{\hbox{\input{symbols/timecounitSym.tex}}\!}{\hbox{\input{symbols/timecounitSym.tex}}\!} 
	\newcommand{\hbox{\input{symbols/antipodeSym.tex}}\!}{\hbox{\input{symbols/antipodeSym.tex}}\!} 
	\newcommand{\hbox{\input{symbols/ZdotSym.tex}}\!}{\hbox{\input{symbols/ZdotSym.tex}}\!} 
	\newcommand{\hbox{\input{symbols/timematchSym.tex}}\!}{\hbox{\input{symbols/timematchSym.tex}}\!} 
	\newcommand{\hbox{\input{symbols/timediagSym.tex}}\!}{\hbox{\input{symbols/timediagSym.tex}}\!} 
	\newcommand{\hbox{\input{symbols/timematchunitSym.tex}}\!}{\hbox{\input{symbols/timematchunitSym.tex}}\!} 
	\newcommand{\hbox{\input{symbols/trivialcharSym.tex}}\!}{\hbox{\input{symbols/trivialcharSym.tex}}\!} 
	\newcommand{\hbox{\input{symbols/XaltdotSym.tex}}\!}{\hbox{\input{symbols/XaltdotSym.tex}}\!}
	\newcommand{\hbox{\input{symbols/internaltimemultSym.tex}}\!}{\hbox{\input{symbols/internaltimemultSym.tex}}\!} 
	\newcommand{\hbox{\input{symbols/internaltimecomultSym.tex}}\!}{\hbox{\input{symbols/internaltimecomultSym.tex}}\!} 
	\newcommand{\hbox{\input{symbols/internaltimeunitSym.tex}}\!}{\hbox{\input{symbols/internaltimeunitSym.tex}}\!} 
	\newcommand{\hbox{\input{symbols/internaltimecounitSym.tex}}\!}{\hbox{\input{symbols/internaltimecounitSym.tex}}\!} 
	\newcommand{\hbox{\input{symbols/ZaltdotSym.tex}}\!}{\hbox{\input{symbols/ZaltdotSym.tex}}\!}
	\newcommand{\hbox{\input{symbols/internaltimematchSym.tex}}\!}{\hbox{\input{symbols/internaltimematchSym.tex}}\!} 
	\newcommand{\hbox{\input{symbols/internaltimediagSym.tex}}\!}{\hbox{\input{symbols/internaltimediagSym.tex}}\!} 
	\newcommand{\hbox{\input{symbols/internaltimematchunitSym.tex}}\!}{\hbox{\input{symbols/internaltimematchunitSym.tex}}\!} 
	\newcommand{\hbox{\input{symbols/internaltrivialcharSym.tex}}\!}{\hbox{\input{symbols/internaltrivialcharSym.tex}}\!} 
	\newcommand{\Dcolour}{black!80}
	\newcommand{\hbox{\input{symbols/DcounitSym.tex}}\!}{\hbox{\input{symbols/DcounitSym.tex}}\!} 
	\newcommand{\Xbwcolour}{black!80}
	\newcommand{\!\hbox{\input{symbols/DcounitSym.tex}}\!\!}{\!\hbox{\input{symbols/DcounitSym.tex}}\!\!} 
	\newcommand{\!\hbox{\input{symbols/DrightcounitLawSym.tex}}\!\!}{\!\hbox{\input{symbols/DrightcounitLawSym.tex}}\!\!} 
	\newcommand{\!\hbox{\input{symbols/DspecialtyLawSym.tex}}\!\!}{\!\hbox{\input{symbols/DspecialtyLawSym.tex}}\!\!} 
	\newcommand{\Zbwcolour}{white}
	\newcommand{\!\hbox{\input{symbols/ZbwcomultSym.tex}}\!\!}{\!\hbox{\input{symbols/ZbwcomultSym.tex}}\!\!} 
	\newcommand{\!\hbox{\input{symbols/ZbwmultSym.tex}}\!\!}{\!\hbox{\input{symbols/ZbwmultSym.tex}}\!\!} 
	\newcommand{\!\hbox{\input{symbols/ZbwcounitSym.tex}}\!\!}{\!\hbox{\input{symbols/ZbwcounitSym.tex}}\!\!} 
	\newcommand{\hbox{\input{symbols/ZbwleftDecohSym.tex}}\!}{\hbox{\input{symbols/ZbwleftDecohSym.tex}}\!} 
	\newcommand{\hbox{\input{symbols/ZbwrightDecohSym.tex}}\!}{\hbox{\input{symbols/ZbwrightDecohSym.tex}}\!} 
	\newcommand{\!\hbox{\input{symbols/ZbwunitSqNormSym.tex}}\!\!}{\!\hbox{\input{symbols/ZbwunitSqNormSym.tex}}\!\!} 
	\newcommand{\Ybwcolour}{black!15}
	\newcommand{\hbox{\input{symbols/YbwdotSym.tex}}\!\!}{\hbox{\input{symbols/YbwdotSym.tex}}\!\!} 
	\newcommand{\!\hbox{\input{symbols/YbwcomultSym.tex}}\!\!}{\!\hbox{\input{symbols/YbwcomultSym.tex}}\!\!} 
	\newcommand{\!\hbox{\input{symbols/YbwmultSym.tex}}\!\!}{\!\hbox{\input{symbols/YbwmultSym.tex}}\!\!} 
	\newcommand{\!\hbox{\input{symbols/YbwcounitSym.tex}}\!\!}{\!\hbox{\input{symbols/YbwcounitSym.tex}}\!\!} 
	\newcommand{\!\hbox{\input{symbols/YbwunitSym.tex}}\!\!}{\!\hbox{\input{symbols/YbwunitSym.tex}}\!\!} 
	\newcommand{\Wbwcolour}{black!50}
	\newcommand{\hbox{\input{symbols/WbwdotSym.tex}}\!}{\hbox{\input{symbols/WbwdotSym.tex}}\!} 
	\newcommand{\hbox{\input{symbols/WbwcomultSym.tex}}\!}{\hbox{\input{symbols/WbwcomultSym.tex}}\!} 
	\newcommand{\hbox{\input{symbols/WbwmultSym.tex}}\!}{\hbox{\input{symbols/WbwmultSym.tex}}\!} 
	\newcommand{\hbox{\input{symbols/WbwcounitSym.tex}}\!}{\hbox{\input{symbols/WbwcounitSym.tex}}\!} 
	\newcommand{\hbox{\input{symbols/WbwunitSym.tex}}\!}{\hbox{\input{symbols/WbwunitSym.tex}}\!} 
	\newcommand{\hbox{\input{symbols/traceSym.tex}}\!}{\hbox{\input{symbols/traceSym.tex}}\!} 
	\newcommand{\hbox{\input{symbols/smallTraceSym.tex}}\!}{\hbox{\input{symbols/smallTraceSym.tex}}\!} 
	\newcommand{\hbox{\input{symbols/cotraceSym.tex}}\!}{\hbox{\input{symbols/cotraceSym.tex}}\!} 
	\newcommand{\hbox{\input{symbols/algebraSym.tex}}\!\!}{\hbox{\input{symbols/algebraSym.tex}}\!\!}	
	\newcommand{\hbox{\input{symbols/measurementSym.tex}}\!\!}{\hbox{\input{symbols/measurementSym.tex}}\!\!} 
	\newcommand{\hbox{\input{symbols/repSym.tex}}\!\!}{\hbox{\input{symbols/repSym.tex}}\!\!} 
	\newcommand{\hbox{\input{symbols/mapSym.tex}}\!\!}{\hbox{\input{symbols/mapSym.tex}}\!\!} 
	\newcommand{\hbox{\input{symbols/mapconjSym.tex}}\!\!}{\hbox{\input{symbols/mapconjSym.tex}}\!\!} 
	\tikzset{
	  rectangle with rounded corners north west/.initial=4pt,
	  rectangle with rounded corners south west/.initial=4pt,
	  rectangle with rounded corners north east/.initial=4pt,
	  rectangle with rounded corners south east/.initial=4pt,
	}
	\tikzset{->-/.style={decoration={markings,mark=at position #1 with {\arrow{>}}},postaction={decorate}}}
	\tikzset{-<-/.style={decoration={markings,mark=at position #1 with {\arrow{<}}},postaction={decorate}}}
	\tikzstyle{every picture}=[baseline=-0.25em,scale=0.5]
	\tikzstyle{box} = [draw,shape=rectangle,inner sep=2pt,minimum height=6mm,minimum width=6mm,fill=white] 
	\tikzstyle{boxlarge} = [draw,shape=rectangle,inner sep=2pt,minimum height=1.5cm,minimum width=8mm,fill=white] 
	\tikzstyle{boxLarge} = [draw,shape=rectangle,inner sep=2pt,minimum height=2cm,minimum width=10mm,fill=white] 
	\tikzstyle{boxsmall} = [draw,shape=rectangle,inner sep=2pt,minimum height=3mm,minimum width=3mm,fill=white] 
	\tikzstyle{dot} = [inner sep=0mm,minimum width=3mm,minimum height=3mm,draw,shape=circle,text depth=-0.1mm]
	\tikzstyle{Zbwdot} = [dot, fill=\Zbwcolour]
	\tikzstyle{Xbwdot} = [dot, fill=\Xbwcolour]
	\tikzstyle{Ybwdot} = [dot, fill=\Ybwcolour]
	\tikzstyle{Wbwdot} = [dot, fill=\Wbwcolour]
	\tikzstyle{antipode} = [boxsmall] 
	\tikzstyle{state} = [draw, rectangle with rounded corners,
	\tikzstyle{statelarge} = [draw, rectangle with rounded corners,
	\tikzstyle{stateLarge} = [draw, rectangle with rounded corners,
	\tikzstyle{effect} = [draw, rectangle with rounded corners,
	\tikzstyle{scalar}=[diamond,draw,inner sep=1pt,font=\small,fill=white]
	\tikzstyle{cdnode}=[fill=white]
	\tikzstyle{labelnode}=[fill=white]
	\tikzstyle{tightlabelnode}=[fill=white,inner sep = 0.1mm]
	\tikzstyle{none}=[inner sep=0pt]
	\tikzstyle{whiteline}=[-, line width=4pt, draw=white]
	\tikzstyle{trace}=[circuit ee IEC,thick,ground,scale=2.5]
	\tikzstyle{cotrace}=[circuit ee IEC,thick,ground,rotate=180,scale=2.5]
	\tikzstyle{upground}=[circuit ee IEC,thick,ground,rotate=90,scale=2.5]
	\tikzstyle{downground}=[circuit ee IEC,thick,ground,rotate=-90,scale=2.5]
	\tikzstyle{doubled} = [line width=1.8pt] 
	\tikzstyle{empty diagram}=[draw=gray!40!white,dashed,shape=rectangle,minimum width=1cm,minimum height=1cm]
\title{Towards Quantum Field Theory \\ in Categorical Quantum Mechanics}
\author{
	Stefano Gogioso\\
	University of Oxford \\
	\texttt{stefano.gogioso@cs.ox.ac.uk}
	\and
	Fabrizio Genovese \\
	University of Oxford \\
	\texttt{fabrizio.genovese@cs.ox.ac.uk}
}
\begin{document}

\maketitle

\begin{abstract}
	In this work, we use tools from non-standard analysis to introduce infinite-dimensional quantum systems and quantum fields within the framework of Categorical Quantum Mechanics. We define a dagger compact category $^\ast\operatorname{Hilb}$ suitable for the algebraic manipulation of unbounded operators, Dirac deltas and plane-waves. We cover in detail the construction of quantum systems for particles in boxes with periodic boundary conditions, particles on cubic lattices, and particles in real space. Not quite satisfied with this, we show how certain non-separable Hilbert spaces can also be modelled in our non-standard framework, and we explicitly treat the cases of quantum fields on cubic lattices and quantum fields in real space. 
\end{abstract}

\section{Introduction} 
\label{section_introduction}

The rigorous diagrammatic methods of Categorical Quantum Mechanics \cite{Abramsky2004,Coecke2008,Coecke2011,Backens2014,Coecke2014a,Coecke2016a} have found widely successful application to quantum information, quantum computation and the foundations of quantum theory. Until very recently, however, a major limitation of the framework was its lack of applicability to infinite-dimensional quantum systems, which include many iconic examples from textbook quantum mechanics and quantum field theory. Previous work by the authors \cite{Gogioso2016b} partially overcame this limitation, using non-standard analysis \`{a} la Robinson \cite{Robinson1974} to define a dagger compact category $\starHilbCategory$ of infinite-dimensional separable Hilbert spaces. A non-standard approach was chosen because it enables a consistent mathematical treatment of infinitesimal and infinite quantities, such as those involved in the manipulation of unbounded operators, Dirac deltas, plane waves, and many other gadgets and structures featuring in traditional approaches to quantum mechanics. 

The debate about the physical interpretation of infinitesimals and infinities is as old as calculus itself \cite{Leibniz1684}, and their use always attracts a healthy dose of scepticism. In time, this has lead to an interesting dichotomy, where infinitesimals are used as a quick way to convince oneself of the validity of a statement, but limits are then required for formal justification. Non-standard analysis simply provides a framework to completely replace limits with a consistent algebraic treatment of infinitesimals and infinities. As long as one is willing to assign physical meaning to non-convergent limit constructions---and mainstream quantum mechanics certainly seems to be---there should be little or no problem of physical interpretation.

The original definition of $\starHilbCategory$ from Ref. \cite{Gogioso2016b} featured unital $\dagger$-Frobenius algebras on all objects, the main ingredient of Categorical Quantum Mechanics (CQM) which was missing from the category $\HilbCategory$ of infinite-dimensional Hilbert spaces and bounded operators \cite{Abramsky2012b}. It enabled some first, successful applications of CQM methods to infinite-dimensional quantum systems, but was otherwise somewhat limited: most notably, it could not be applied to the case of unbounded quantum particles on real spaces (the single most important textbook example), nor could it tackle the non-separable Hilbert spaces required for the treatment of quantum fields as will be understood as part of this work.

\newpage
\noindent The limitations of the original definition were self-imposed, aimed at keeping the framework simple and more easily relatable, and did not play any relevant role in most constructions we presented. The starting point of this work, in Section \ref{section_StarHilbBeyond}, is a re-definition of the category $\starHilbCategory$, addressing those unnecessary limitations. Specifically, we remove the requirement that the underlying Hilbert spaces for the objects of $\starHilbCategory$ be standard and separable, and we provide a basis-independent formulation of the objects themselves: aside from making the formalism significantly more powerful, this choice has the categorically pleasing effect of identifying $\starHilbCategory$ as a full subcategory of the Karoubi envelope for the category of non-standard Hilbert spaces and $\starComplexs$-linear maps. 

The rest of this work is dedicated to the explicit construction of quantum systems of interest in a number of traditional applications of quantum mechanics. In Section \ref{section_box} we construct the space of wavefunctions in an $n$-dimensional box with periodic boundary conditions $\torusGroup{n}$, while in Section \ref{section_lattice} we construct the space of wavefunctions on an $n$-dimensional lattice $\integers^n$; both these examples were sketched in the original Ref. \cite{Gogioso2016b}, and are here reproduced in additional detail. In Section \ref{section_reals} we construct the space of unbounded wavefunctions in an $n$-dimensional real space $\reals^n$, which we approximate using an infinite lattice of infinitesimal mesh (a well-tested trick in non-standard analysis \cite{Robinson1974}). For each of these three constructions, we provide a strongly complementary pair corresponding to the position and momentum observables. In Sections \ref{section_fieldLattice} and \ref{section_QFT}, we proceed to treat two cases of non-separable standard Hilbert spaces, exploiting a somewhat surprising fact about exponentials of infinite non-standard integers: in Section \ref{section_fieldLattice} we construct the space of quantum fields on an $n$-dimensional lattice $\integers^n$; in Section \ref{section_QFT}, we once again use an infinite lattice of infinitesimal mesh to construct the space of quantum fields in real space $\reals^n$.

Before moving on, we should remark that the originality of this work and the work of Ref. \cite{Gogioso2016b} does not lie in the application of non-standard methods to conventional quantum theory, for which a rich literature already exists \cite{Farrukh1975a,Ozawa1989,Ojima1993a,Ozawa1997,Yamashita2000,Raab2006}. Rather, it lies in the application of non-standard methods to solve a set of issues---lack of Frobenius algebras, compact closed structure and strongly complementary pairs, to mention just a few---which prevented the algebraic/diagrammatic methods of Categorical Quantum Mechanics from being applied to the infinite-dimensional setting. 
\vspace{-1.5mm}

\section{Redefining \texorpdfstring{$\starHilbCategory$}{Star Hilb}}
\label{section_StarHilbBeyond}
\vspace{-0.5mm}
We define the symmetric monoidal category $\starHilbCategory$ (read: \textbf{Star Hilb}) to have objects in the form of pairs $\SpaceH := (|\SpaceH|, P_\SpaceH)$, where $|\SpaceH|$ is a non-standard Hilbert space\footnote{Note that we dropped the requirement that $|\SpaceH|$ be separable, or even that $|\SpaceH| = \nonstd{V}$ for some standard Hilbert space $V$.} (the \textbf{underlying Hilbert space}) and $P_\SpaceH : |\SpaceH| \rightarrow |\SpaceH|$ is an internal non-standard $\starComplexs$-linear map which satisfies the following requirements.
\begin{itemize}
	\item The map $P_\SpaceH$ is a self-adjoint idempotent (we refer to it as the \textbf{truncating projector}).
	\item There is some family $\ket{e_n}_{n=1}^{D}$ of orthonormal vectors\footnote{Note that we dropped the requirement that $\ket{e_n}_{n=1}^{D}$ be (a non-standard extension of) a standard orthonormal basis.} in $|\SpaceH|$, for some $D \in \starNaturals$, such that:
	\begin{equation}\label{truncatingProjectorResolution}
		P_\SpaceH = \sum_{n=1}^{D} \ket{e_n}\bra{e_n}
	\end{equation}
	The existence of such families is guaranteed by Transfer Theorem. Again by Transfer Theorem, $D$ is the same for all such choices of orthonormal families, and we can consistently define the \textbf{dimension} of $\SpaceH$ to be $\dim{\SpaceH} := D \in \starNaturals$.
\end{itemize}
The morphisms of our re-defined $\starHilbCategory$ take the same form as those given in the original definition (they are internal non-standard $\starComplexs$-linear maps, with the truncating projectors acting as identities):
\begin{equation}\label{morphisms}
\Hom{\starHilbCategory}{\SpaceH}{\SpaceG} := \suchthat{\;P_\SpaceG \circ F \circ P_\SpaceH\;}{\;F:  |\SpaceH| \,\rightarrow\, |\SpaceG| \text{ internal linear map}}.
\end{equation}
This means that $\starHilbCategory$ is now a genuine full subcategory of the Karoubi envelope for the category of non-standard Hilbert spaces and non-standard $\starComplexs$-linear maps. The rest of the construction proceeds exactly as it did in the original definition, and $\starHilbCategory$ is a dagger symmetric monoidal category ($\dagger$-SMC, for short). Morphisms can still be represented as matrices (although no longer in a canonical way), by choosing orthonormal sets which diagonalise the relevant truncating projectors:
\begin{equation}\label{matrixRepresentation}
\truncate{F} := P_\SpaceG \circ F \circ P_\SpaceH = 
\sum_{m=1}^{\dim{\SpaceG}} \sum_{n=1}^{\dim{\SpaceH}} 
\ket{f_{m}} \Big( \bra{f_{m}} F \ket{e_{n}} \Big) \bra{e_{n}}.
\end{equation}
The tensor, symmetric braiding and dagger can be defined as usual by looking at the matrix decomposition, and by Transfer Theorem they are invariant under different choices of diagonalising orthonormal sets. Similarly, unital special commutative $\dagger$-Frobenius algebras can be constructed for all orthonormal bases of an object $\SpaceH$ (i.e. for all orthonormal families diagonalising the truncating projector $P_\SpaceH$). The interested reader is referred to Ref. \cite{Gogioso2016b} for the details of the original constructions, which are unchanged. 

Contrary to the dagger symmetric monoidal structure, the compact closed structure in the original definition was given in terms of the chosen orthonormal basis for each object, and needs to be adapted to our new basis-invariant definition. Consider an object $\SpaceH$ of $\starHilbCategory$, together with a diagonalisation $P_\SpaceH = \sum_{n=1}^{\dim{\SpaceH}} \ket{e_n}\bra{e_n}$ of its truncating projector. Also consider the dual $|\SpaceH|^\ast$ to the underlying Hilbert space $|\SpaceH|$ of $\SpaceH$ (which exists by Transfer Theorem), together with the orthonormal set $\ket{\xi_n}_{n=1}^{\dim{\SpaceH}}$ of vectors in $|\SpaceH|^\ast$ specified by the adjoints of the states\footnote{I.e. $\ket{\xi_n}$ is the vector in $|\SpaceH|^\ast$ specified by the linear operator $\bra{e_n}$ on $|\SpaceH|$.} in the orthonormal family $\ket{e_n}_{n=1}^{\dim{\SpaceH}}$. We define the \textbf{dual object} $\SpaceH^\ast$ to be given by the pair $\SpaceH^\ast:=(|\SpaceH|^\ast, P_{\SpaceH^\ast})$, where the truncating projector is defined by $P_{\SpaceH^\ast}:=\sum_{n=1}^{\dim{\SpaceH}}\ket{\xi_n}\bra{\xi_n}$ (it is a standard check that this definition is basis-invariant). The \textbf{compact closed structure} is defined as follows (once again, it is a standard check that this definition is basis-invariant):
\begin{equation}
\begin{tikzpicture}
	\begin{pgfonlayer}{nodelayer}
		\node [style=none] (0) at (-12.75, 0) {};
		\node [style=tightlabelnode] (1) at (-6, 0) {$\sum\limits_{n=1}^{\dim{\SpaceH}} \ket{\xi_n} \otimes \ket{e_n}$};
		\node [style=none, doubled] (2) at (-11.5, 1) {};
		\node [style=none, doubled] (3) at (5, 0) {};
		\node [style=none] (4) at (3.75, -1) {};
		\node [style=none, doubled] (5) at (3.75, 1) {};
		\node [style=none, doubled] (6) at (-11.5, -1) {};
		\node [style=none, doubled] (7) at (-10, 0) {$:=$};
		\node [style=none, doubled] (8) at (6.5, 0) {$:=$};
		\node [style=tightlabelnode] (9) at (10.5, 0) {$\sum\limits_{n=1}^{\dim{\SpaceH}} \bra{e_n} \otimes \bra{\xi_n}$};
	\end{pgfonlayer}
	\begin{pgfonlayer}{edgelayer}
		\draw [style=<-, in=180, out=90] (0.center) to (2.center);
		\draw [style=-, in=-90, out=0] (4.center) to (3.center);
		\draw [style=->, in=90, out=0] (5.center) to (3.center);
		\draw [style=-, in=180, out=-90] (0.center) to (6.center);
	\end{pgfonlayer}
\end{tikzpicture}
\end{equation}
The dual object $\SpaceH^\ast$ has the same dimension as $\SpaceH$, and the non-standard natural number $\dim{\SpaceH}$ coincides with the scalar given by the definition of dimension in dagger compact categories:
\begin{equation}
\begin{tikzpicture}
	\begin{pgfonlayer}{nodelayer}
		\node [style=none] (0) at (-12.5, 0) {};
		\node [style=tightlabelnode] (1) at (-4, 0) {$\sum\limits_{n=1}^{\dim{\SpaceH}} \braket{\xi_n}{\xi_n} \otimes \braket{e_n}{e_n}$};
		\node [style=none, doubled] (2) at (-11.5, 1) {};
		\node [style=none, doubled] (3) at (-10.5, 0) {};
		\node [style=none] (4) at (-11.5, -1) {};
		\node [style=none, doubled] (5) at (-11.5, 1) {};
		\node [style=none, doubled] (6) at (-11.5, -1) {};
		\node [style=none, doubled] (7) at (-9, 0) {$=$};
		\node [style=none, doubled] (8) at (1, 0) {$=$};
		\node [style=tightlabelnode] (9) at (3.5, 0) {$\dim{\SpaceH}$};
	\end{pgfonlayer}
	\begin{pgfonlayer}{edgelayer}
		\draw [style=<-, in=180, out=90] (0.center) to (2.center);
		\draw [style=->, in=-90, out=0] (4.center) to (3.center);
		\draw [style=-, in=90, out=0] (5.center) to (3.center);
		\draw [style=-, in=180, out=-90] (0.center) to (6.center);
	\end{pgfonlayer}
\end{tikzpicture}
\end{equation}

Let $\starsHilbCategory$ be the full subcategory of $\starHilbCategory$ given by those objects $\SpaceH$ such that $|\SpaceH| = \nonstd{V}$ for some separable standard Hilbert space $V$, and such that the truncating projector spans all near-standard vectors. Let $\starsHilbCategoryNearStd$ be the sub-$\dagger$-SMC of $\starsHilbCategory$ given by near-standard morphisms. In particular, the original $\starHilbCategory$ from Ref. \cite{Gogioso2016b} is a full subcategory of the newly defined $\starsHilbCategory$, and the original $\starHilbCategoryNearStd$ featuring in Theorem 3.4 of Ref. \cite{Gogioso2016b} is a full subcategory of the newly defined $\starsHilbCategoryNearStd$. We can define a \textbf{standard part functor} $\stdpartSym : \; \starsHilbCategoryNearStd \rightarrow \sHilbCategory$, which acts as $\SpaceH \mapsto |\SpaceH|$ on objects and as $F \mapsto \stdpart{F}$ on morphisms. The standard part functor is $\complexs$-linear, and identifies two near-standard maps $F,G: \SpaceH \rightarrow \SpaceK$ if and only if $F-G$ has infinitesimal operator norm; this defines an equivalence relation on morphisms in $\starsHilbCategoryNearStd$, which we denote by $\sim$ and refer to as \textbf{infinitesimal equivalence}. The equivalence relation~$\sim$ respects composition, tensor product and dagger, and endows $\starsHilbCategoryNearStd$ with the structure of a $\dagger$-symmetric monoidal 2-category. We can also define a weak\footnote{By weak we mean that composition and tensor product are respected only up to infinitesimal equivalence, i.e. that we have $\lift{f \circ g}{\omega} \sim \lift{f}{\omega} \circ \lift{g}{\omega}$ and $\lift{f \otimes g}{\omega} \sim \lift{f}{\omega} \otimes \lift{g}{\omega}$} \textbf{truncation functor} $\liftSym{\omega}: \sHilbCategory \rightarrow\, \starsHilbCategoryNearStd$, which acts as $V \mapsto (V,P^{(V)})$ on objects\footnote{The truncating projectors $P^{(V)}$ are defined by appropriately choosing an orthonormal basis $\ket{e_n^{(V)}}_n$ for each separable $V$, and letting $P^{(V)} := \sum_{n=1}^{\dim{V}} \ket{e_n^{(V)}}\bra{e_n^{(V)}}$ (where we set $\dim{V} := \omega$ for infinite-dimensional standard Hilbert spaces $V$).} and sends the standard morphism $f: V \rightarrow W $ to the non-standard morphism $P^{(W)} \circ \nonstd{f} \circ P^{(V)}$ (here $\nonstd{\!f}$ is the non-standard extension of $f$). The following result relates $\sHilbCategory$ and $\starsHilbCategoryNearStd_{\omega}$, the full subcategory of $\starsHilbCategoryNearStd$ spanned by those objects $\SpaceH$ having dimension $\dim{\SpaceH} \in \starNaturals$ which is either a finite natural or the infinite natural $\omega$. 
\begin{theorem}[\textbf{Updated version of Theorem 3.4 from Ref. \cite{Gogioso2016b}}] \hfill\\\label{thm_FundamentalTheorem}
The standard part and truncation functors determine a weak equivalence between $\sHilbCategory$ and $\starsHilbCategoryNearStd_{\omega}$:
\begin{enumerate}
	\itemsep0em
	\item[(i)] $\stdpartSym$ is a full dagger monoidal functor, which is surjective on objects;
	\item[(ii)] $\liftSym{\omega}$ is a weak dagger monoidal faithful functor from a $\dagger$-SMC to a $\dagger$-symmetric monoidal 2-category, which is injective and essentially surjective on objects (its restriction to $\fdHilbCategory$ is strict);
	\item[(iii)] $\stdpartSym$ is a left inverse to $\liftSym{\omega}$; 
	\item[(iv)] for all objects $\SpaceH$ of $\,\starsHilbCategoryNearStd_{\omega}$ there is a canonical unitary isomorphism $\truncate{u}_\SpaceH: \SpaceH \rightarrow \lift{\stdpart{\SpaceH}}{\omega}$, the unique one which satisfies $\stdpart{\truncate{u}_\SpaceH} = \id{\stdpart{\SpaceH}}$;
	\item[(v)] for all morphisms $F: \SpaceH \rightarrow \SpaceG$ in $\starsHilbCategoryNearStd_{\omega}$ we have $\truncate{u}_\SpaceG^\dagger \circ \lift{\stdpart{F}}{\omega} \circ \truncate{u}_\SpaceH \sim F$.
\end{enumerate}
\end{theorem}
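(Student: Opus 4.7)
The plan is to verify the five claims by systematically exploiting the explicit description of $\stdpartSym$ as restriction to near-standard vectors followed by taking standard parts, together with the fact that both $P^{(V)}$ and $P_\SpaceH$ (for $\SpaceH$ in $\starsHilbCategoryNearStd_{\omega}$) span exactly the near-standard vectors up to infinitesimals. Most of the argument follows the template of the original Theorem 3.4 in Ref. \cite{Gogioso2016b}, and I would focus on the adaptations required by the broader, basis-independent setting.

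First I would dispatch (i) and (iii) together. On objects, $\stdpart{\lift{V}{\omega}} = V$ by direct inspection, giving both surjectivity on objects of $\stdpartSym$ and the object part of (iii). On morphisms, I would compute $\stdpart{\lift{f}{\omega}} = \stdpart{P^{(W)} \circ \nonstd{f} \circ P^{(V)}}$; because $f$ is bounded, $\nonstd{f}$ preserves near-standardness, and the truncating projectors act as identities on near-standard vectors modulo infinitesimals, so the standard part recovers $f$. Fullness of $\stdpartSym$ follows immediately, and dagger-monoidality reduces to checking that $\stdpartSym$ commutes with $\dagger$ and $\tensor$ on near-standard morphisms, which is handled by Transfer as in the original proof.

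For (ii), injectivity on objects is transparent, and faithfulness is a consequence of the injectivity of non-standard extension: $\lift{f}{\omega} = \lift{g}{\omega}$ forces $\nonstd{f}$ and $\nonstd{g}$ to agree on the image of $P^{(V)}$, which contains a standard basis of $V$, so $f=g$. Essential surjectivity will be delivered by (iv). The main delicate point is the weakness of composition and tensor preservation: $\lift{f \circ g}{\omega}$ and $\lift{f}{\omega} \circ \lift{g}{\omega}$ differ by insertion of an extra $P^{(U)}$ in the middle, which acts as the identity on near-standard vectors up to infinitesimal error; similarly, $P^{(V \tensor W)}$ and $P^{(V)} \tensor P^{(W)}$ agree on near-standard vectors modulo infinitesimals, so both identities hold only up to $\sim$.

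For (iv), both $\SpaceH$ and $\lift{\stdpart{\SpaceH}}{\omega}$ share the underlying space $\nonstd{V}$, and both truncating projectors span the near-standard sector with dimension $\omega$. Starting from a standard orthonormal basis $\ket{e_n^{(V)}}$ of $V$, I would construct an orthonormal diagonalisation $\ket{e_n}$ of $P_\SpaceH$ with $\stdpart{\ket{e_n}} = \ket{e_n^{(V)}}$ via a transfer-enhanced Gram--Schmidt procedure on the near-standard image of $P_\SpaceH$; the unitary $\truncate{u}_\SpaceH$ sending $\ket{e_n} \mapsto \ket{e_n^{(V)}}$ then satisfies $\stdpart{\truncate{u}_\SpaceH} = \id{\stdpart{\SpaceH}}$, and uniqueness in the 2-categorical sense follows because any two such choices differ by a unitary with trivial standard part, hence are $\sim$-equivalent. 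Claim (v) is then immediate: applying $\stdpartSym$ to $\truncate{u}_\SpaceG^\dagger \circ \lift{\stdpart{F}}{\omega} \circ \truncate{u}_\SpaceH$ and using functoriality plus (iii) yields $\id{} \circ \stdpart{F} \circ \id{} = \stdpart{F}$, so the displayed composite and $F$ have the same standard part and are therefore infinitesimally equivalent. The hardest step will be the uniform bookkeeping of the infinitesimal discrepancies caused by intermediate truncating projectors in (ii) and by the Gram--Schmidt adjustment in (iv); these are what force the equivalence to be only weak, and most of a fully written-out proof would be devoted to controlling them in operator norm.
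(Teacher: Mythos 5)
Your proposal is correct and follows essentially the same route as the paper: the paper's own proof simply defers to Theorem 3.4 of Ref.~\cite{Gogioso2016b}, noting only that the subspace defined by each truncating projector contains (up to infinitesimals) all near-standard vectors, and this is exactly the fact your argument turns on throughout, with your transfer/Gram--Schmidt construction of $\truncate{u}_\SpaceH$ filling in the one step that genuinely needs adapting to the new basis-independent objects. The infinitesimal bookkeeping you defer at the end is precisely what the paper also leaves implicit, so there is no substantive divergence between the two arguments.
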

\begin{proof}
Essentially the same of Theorem 3.4 from Ref. \cite{Gogioso2016b}, using the fact that the subspace defined by the truncating projector $P_\SpaceH$ contains at least all near-standard vectors in $\SpaceH$.
\end{proof}

\begin{remark}
Unfortunately, the updated version of Theorem 3.4 only covers standard separable Hilbert spaces, while we will see that our re-defined $\starHilbCategory$ allows us to treat some standard non-separable spaces as well. We believe Theorem 3.4 to be likely to extend to some subcategory of standard non-separable spaces, but an exact determination of the necessary and sufficient conditions is left to future work.
\end{remark}

\noindent The essence of Theorem \ref{thm_FundamentalTheorem} is that $\sHilbCategory$ is equivalent to the subcategory $\starsHilbCategoryNearStd_{\omega}$ of $\starHilbCategory$ given by near-standard morphisms, as long as we take care to equate morphisms which are infinitesimally close. The equivalence allows one to prove results about $\sHilbCategory$ by working in $\starHilbCategory$ and taking advantage of the full CQM machinery, according to the following general recipe:
\vspace{-2mm}
\begin{multicols}{2}
\begin{enumerate}
	\itemsep0em
	\item[(i)] start from a morphism in $\sHilbCategory$; 
	\item[(ii)] lift to $\starsHilbCategoryNearStd_{\omega}$ via the lifting functor;
	\item[(iii)] work in $\starHilbCategory$, obtain a result in $\starsHilbCategoryNearStd_{\omega}$; 
	\item[(iv)] descend to $\sHilbCategory$ via the standard part functor. 
\end{enumerate}
\end{multicols}
\vspace{-2mm}
\noindent This procedure is conceptually akin to using the two directions of the transfer theorem to prove results of standard analysis using non-standard methods. When proving \textit{equalities} of morphisms in $\sHilbCategory$, it is in fact sufficient to lift both sides via $\liftSym{\omega}$, and prove the equality in $\starHilbCategory$ without further constraints (this is because both sides will necessarily be lifted to $\starsHilbCategoryNearStd_{\omega}$).

The arbitrary choice of infinite natural $\omega$ in the lifting part of the recipe might seem unnatural at first, as the objects $\lift{V}{\omega}$ and $\lift{V}{\omega'}$ are not isomorphic in $\starHilbCategory$ for different infinite naturals $\omega \neq \omega'$. However, this is not actually an issue: from the perspective of $\sHilbCategory$, the two spaces are equivalent for all intents and purposes, and any proof that can be performed in one space can also be performed in the other. This could be made precise by saying that setting $\Phi_{\omega,\omega'} := (\liftSym{\omega'}\circ \stdpartSym)$ defines a weak equivalence $\Phi_{\omega,\omega'} : \starHilbCategoryNearStd_\omega \rightarrow \starHilbCategoryNearStd_{\omega'}$ such that $\stdpart{\Phi_{\omega,\omega'}(\truncate{F})} = \stdpart{\truncate{F}}$ holds for all morphisms $\truncate{F}$ in $ \starHilbCategoryNearStd_\omega$. From this point of view, the same limiting objects (e.g. Diract deltas, plane-waves and unbounded operators) corresponding to different choices of $\omega$ can be interpreted as incarnations of the same conceptual objects seen at different values for the \inlineQuote{infinite cut-off parameter} $\omega$.

\newpage
\section{Particles in boxes with periodic boundary conditions}
\label{section_box}

Consider a quantum particle in a $n$-dimensional box with periodic boundary conditions: this can equivalently be seen as a quantum particle in an $n$-dimensional torus
, and hence it corresponds to the Hilbert space $\Ltwo{\torusGroup{n}}$ (the $n$-dimensional torus is acted freely and transitively upon by its translation group $\torusGroup{n}$). The momentum eigenstates for the particle form a countable complete orthonormal basis for $\Ltwo{\torusGroup{n}}$, with the eigenstate of momentum $\underline{k} \hbar$ ($\underline{k} \in \integers^n$) given by the following square-integrable function $\torusGroup{n} \rightarrow \complexs$:
\vspace{-0.2mm}
\begin{equation}\label{particleBoxMomentumEigenstates}
\ket{\goodchi_{\underline{k}}} := \underline{x} \rightarrow e^{-i 2 \pi \, \underline{k} \cdot \underline{x}}
\end{equation} 
We take a non-standard extension $\ket{\goodchi_{\underline{k}}}_{k_1,...,k_n=-\omega}^{+\omega}$ of the standard orthonormal basis of momentum eigenstates, where $\omega \in \starNaturals$ is some infinite non-standard natural, and we consider the $\big((2 \omega + 1)^n\big)$-dimensional object $\torusAlgebra{n} := (\nonstd{\Ltwo{\torusGroup{n}}},P_\torusAlgebra{n})$ defined by the following truncating projector:
\vspace{-0.2mm}
\begin{equation}
P_\torusAlgebra{n} := \sum_{k_1=-\omega}^{+\omega} ... \sum_{k_n = -\omega}^{+\omega} \ket{\goodchi_{\underline{k}}}\bra{\goodchi_{\underline{k}}}
\end{equation}  
We take this object to be our model in $\starHilbCategory$ of a quantum particle in an $n$-dimensional box with periodic boundary conditions, and we refer to the family $\ket{\goodchi_{\underline{k}}}_{k_1,...,k_n=-\omega}^{+\omega}$ as the \textbf{momentum eigenstates} for $\torusAlgebra{n}$. Pretty much by definition, this object comes with a unital special commutative $\dagger$-Frobenius algebra corresponding to the \textbf{momentum observable} for the particle:
\vspace{-0.2mm}
\begin{equation}\label{particleBoxMomentumFrobAlgebra}
\begin{tikzpicture}
	\begin{pgfonlayer}{nodelayer}
		\node [style=none, doubled] (0) at (6.5, 0) {};
		\node [style=none, doubled] (1) at (-8, 0) {$:=$};
		\node [style=tightlabelnode] (2) at (14.5, 0) {$\sum\limits_{k_1=-\omega}^{+\omega} ... \sum\limits_{k_n=-\omega}^{+\omega}  \bra{\goodchi_{\underline{k}}}$};
		\node [style=none, doubled] (3) at (10.5, 0) {$:=$};
		\node [style=none] (4) at (-12.5, 0) {};
		\node [style=tightlabelnode] (5) at (-2, 0) {$\sum\limits_{k_1=-\omega}^{+\omega} ... \sum\limits_{k_n=-\omega}^{+\omega}  \ket{\goodchi_{\underline{k}}} \otimes \ket{\goodchi_{\underline{k}}} \otimes \bra{\goodchi_{\underline{k}}}$};
		\node [style=none, doubled] (6) at (-9.5, 0.75) {};
		\node [style=none, doubled] (7) at (-9.5, -0.75) {};
		\node [style=Zbwdot] (8) at (-11, 0) {};
		\node [style=Zbwdot] (9) at (8.25, 0) {};
	\end{pgfonlayer}
	\begin{pgfonlayer}{edgelayer}
		\draw [style=-] (9) to (0.center);
		\draw [style=-] (4.center) to (8);
		\draw [style=-, in=180, out=45] (8) to (6.center);
		\draw [style=-, in=180, out=-45] (8) to (7.center);
	\end{pgfonlayer}
\end{tikzpicture}
\end{equation} 
The labels of the momentum eigenstates can be endowed with the infinite abelian group structure of $(\starIntegersModPow{2\omega+1}{n},\oplus,\underline{0})$, which is defined in full detail in the Appendix, and we can consider the linear extension of the group multiplication and unit:
\vspace{-0.2mm}
\begin{equation}\label{particleBoxPositionFrobAlgebra}
\begin{tikzpicture}
	\begin{pgfonlayer}{nodelayer}
		\node [style=none, doubled] (0) at (10, 0) {};
		\node [style=none, doubled] (1) at (-8, 0) {$:=$};
		\node [style=tightlabelnode] (2) at (14.5, 0) {$\ket{\goodchi_{\underline{0}}}$};
		\node [style=none, doubled] (3) at (12, 0) {$:=$};
		\node [style=none] (4) at (-9.5, 0) {};
		\node [style=tightlabelnode] (5) at (-1, 0) {$\sum\limits_{k_1,h_1=-\omega}^{+\omega} ... \sum\limits_{k_n,h_n=-\omega}^{+\omega}  \ket{\goodchi_{\underline{k}\oplus \underline{h}}} \otimes \bra{\goodchi_{\underline{k}}} \otimes \bra{\goodchi_{\underline{h}}}$};
		\node [style=none, doubled] (6) at (-12.5, 0.75) {};
		\node [style=none, doubled] (7) at (-12.5, -0.75) {};
		\node [style=Xbwdot] (8) at (-11, 0) {};
		\node [style=Xbwdot] (9) at (8.25, 0) {};
	\end{pgfonlayer}
	\begin{pgfonlayer}{edgelayer}
		\draw [style=-] (9) to (0.center);
		\draw [style=-, in=135, out=0] (6.center) to (8);
		\draw [style=-, in=-135, out=0] (7.center) to (8);
		\draw [style=-] (8) to (4.center);
	\end{pgfonlayer}
\end{tikzpicture}
\end{equation} 
It is immediate to check that these two maps form, together with their adjoints, a unital quasi-special commutative $\dagger$-Frobenius algebra, with normalisation factor $(2 \omega + 1)^n$. Furthermore, a result of Ref. \cite{Gogioso2017b} guarantees that $(\hbox{\begin{tikzpicture} [scale=1.2,transform shape] 

\def\deltax{0.3} 
\def\deltay{0.5} 


\node [dot, fill=\Dcolour] (mult) at (0,0) {};

\end{tikzpicture}}\!\!,\hbox{\begin{tikzpicture} [scale=1.2,transform shape] 

\def\deltax{0.3} 
\def\deltay{0.5} 


\node [dot, fill=\Zbwcolour] (mult) at (0,0) {};

\end{tikzpicture}}\!\!)$ is a strongly complementary pair of unital $\dagger$-Frobenius algebras\footnote{The relevant result is Theorem 3.4 of Ref. \cite{Gogioso2017b}, which requires $\hbox{\begin{tikzpicture} [scale=1.2,transform shape] 

\def\deltax{0.3} 
\def\deltay{0.5} 


\node [dot, fill=\Zbwcolour] (mult) at (0,0) {};

\end{tikzpicture}}\!\!$ to have enough classical states (the momentum eigenstates), $(\!\hbox{\begin{tikzpicture} [scale=1.2,transform shape,rotate=-90] 

\def\deltax{0.3} 
\def\deltay{0.5} 


\node (mult_label_inl) at (-\deltax,-\deltay) {};
\node (mult_label_inr) at (+\deltax,-\deltay) {};
\node [dot, fill=\Dcolour] (mult) at (0,0) {};
\node (mult_label_out) at (0,+\deltay) {};

\draw[-] [out=90,in=225](mult_label_inl) to (mult);
\draw[-] [out=90,in=315](mult_label_inr) to (mult);
\draw[-] (mult) to (mult_label_out);

\end{tikzpicture}}\!\!,\!\hbox{\begin{tikzpicture} [scale=1.2,transform shape,rotate=-90] 

\def\deltax{0.3} 
\def\deltay{0.5} 

\path[use as bounding box] (-\deltax,-\deltay) rectangle (\deltax,\deltay);

\node [dot, fill=black] (mult) at (0,0) {};
\node (mult_label_out) at (0,+\deltay) {};
\draw[-] (mult) to (mult_label_out);

\end{tikzpicture}}\!\!)$ to endow the $\hbox{\begin{tikzpicture} [scale=1.2,transform shape] 

\def\deltax{0.3} 
\def\deltay{0.5} 


\node [dot, fill=\Zbwcolour] (mult) at (0,0) {};

\end{tikzpicture}}\!\!$-classical states with the structure of a group (the group $\starIntegersModPow{2\omega+1}{n}$), and $\!\hbox{\begin{tikzpicture} [scale=1.2,transform shape,rotate=-90] 

\def\deltax{0.3} 
\def\deltay{0.5} 

\path[use as bounding box] (-\deltax,-\deltay) rectangle (\deltax,\deltay);

\node [dot, fill=\Zbwcolour] (mult) at (0,-0.25*\deltay) {};
\node (mult_label_out) at (0,+\deltay) {};
\draw[-] (mult) to (mult_label_out);

\end{tikzpicture}}\!\!$ to be $\hbox{\begin{tikzpicture} [scale=1.2,transform shape] 

\def\deltax{0.3} 
\def\deltay{0.5} 


\node [dot, fill=\Dcolour] (mult) at (0,0) {};

\end{tikzpicture}}\!\!$-classical.}. The following \textbf{position eigenstates} $\ket{\delta_{\underline{x}}}$, indexed by $\underline{x} \in \frac{1}{2\omega+1}\starIntegersModPow{2\omega+1}{n}$ (a lattice in the non-standard torus $\nonstd{\torusGroup{n}}$) are the $\hbox{\begin{tikzpicture} [scale=1.2,transform shape] 

\def\deltax{0.3} 
\def\deltay{0.5} 


\node [dot, fill=\Dcolour] (mult) at (0,0) {};

\end{tikzpicture}}\!\!$-classical states (they are orthogonal\footnote{They are orthogonal because they are copied by a quasi-special commutative $\dagger$-FA in a SMC where scalars form a field \cite{Coecke2013b}.} and have square norm $(2 \omega + 1)^n$; see the Appendix for a proof):
\vspace{-0.2mm}
\begin{equation}
	\ket{\delta_{\underline{x}}} := \sum_{k_1 = -\omega}^{+ \omega} ... \sum_{k_n = -\omega}^{+\omega} \goodchi_{\underline{k}}(\underline{x})^\ast \ket{\goodchi_{\underline{k}}}
\end{equation}
For all $\underline{x} \in \frac{1}{2\omega+1}\starIntegersModPow{2\omega+1}{n}$ and all standard smooth $f \in \Ltwo{\torusGroup{n}}$, the position eigenstates satisfy the identity $\stdpart{\braket{\delta_{\underline{x}}}{f}} =  f(\stdpart{\underline{x}})$ by Transfer Theorem: hence the position eigenstates (as we defined them) behave exactly as expected from Dirac delta functions, and we can legitimately refer to $\hbox{\begin{tikzpicture} [scale=1.2,transform shape] 

\def\deltax{0.3} 
\def\deltay{0.5} 


\node [dot, fill=\Dcolour] (mult) at (0,0) {};

\end{tikzpicture}}\!\!$ as the \textbf{position observable} for the particle. Note the duality between the large-scale cut-off on momenta in $\starIntegers^n$ (which are bounded in magnitude by $\sqrt{n} \omega \hbar$) and the small-scale cut-off on positions in $\nonstd{\torusGroup{n}}$ (which are discretised onto a lattice of infinitesimal mesh $\frac{1}{2\omega+1}$): in our non-standard framework, this well-known phenomenon arises in a purely algebraic way from the copy condition for $\hbox{\begin{tikzpicture} [scale=1.2,transform shape] 

\def\deltax{0.3} 
\def\deltay{0.5} 


\node [dot, fill=\Dcolour] (mult) at (0,0) {};

\end{tikzpicture}}\!\!$-classical states (see the Appendix).

\newpage
\section{Particles on lattices}
\label{section_lattice}

Consider a particle on an $n$-dimensional cubic lattice, corresponding to the Hilbert space $\Ltwo{\integers^n}$. The position eigenstates for the particle form a countable complete orthonormal basis for $\Ltwo{\integers^n}$, with the eigenstate of position $\underline{k} \in \integers^n$ given by the following square-integrable function $\integers^n \rightarrow \complexs$:
\begin{equation}
	\ket{\delta_{\underline{k}}} := \underline{h} \mapsto 
	\begin{cases}
		1 &\text{ if } \underline{k} = \underline{h} \\
		0 &\text{ otherwise}
	\end{cases}
\end{equation}  
We take a non-standard extension $\ket{\delta_{\underline{k}}}_{k_1,...,k_n = -\omega}^{\omega}$ of the standard orthonormal basis of position eigenstates, where $\omega \in \starNaturals$ is some infinite non-standard natural, and we consider the $\big((2 \omega + 1)^n\big)$-dimensional object $\integersAlgebra{n} := (\nonstd{\Ltwo{\integers^n}},P_\integersAlgebra{n})$ defined by the following truncating projector:
\begin{equation}
P_\integersAlgebra{n} := \sum_{k_1=-\omega}^{+\omega} ... \sum_{k_n = -\omega}^{+\omega} \ket{\delta_{\underline{k}}}\bra{\delta_{\underline{k}}}
\end{equation}  
We take this object to be our model in $\starHilbCategory$ of a quantum particle on an $n$-dimensional lattice, and we refer to the family $\ket{\delta_{\underline{k}}}_{k_1,...,k_n=-\omega}^{+\omega}$ as the \textbf{position eigenstates} for $\integersAlgebra{n}$. Pretty much by definition, this object comes with a unital special commutative $\dagger$-Frobenius algebra corresponding to the \textbf{position observable} for the particle:
\begin{equation}\label{particleLatticePositionFrobAlgebra}
\begin{tikzpicture}
	\begin{pgfonlayer}{nodelayer}
		\node [style=none, doubled] (0) at (6.5, 0) {};
		\node [style=none, doubled] (1) at (-8, 0) {$:=$};
		\node [style=tightlabelnode] (2) at (14.5, 0) {$\sum\limits_{k_1=-\omega}^{+\omega} ... \sum\limits_{k_n=-\omega}^{+\omega}  \bra{\delta_{\underline{k}}}$};
		\node [style=none, doubled] (3) at (10.5, 0) {$:=$};
		\node [style=none] (4) at (-12.5, 0) {};
		\node [style=tightlabelnode] (5) at (-2, 0) {$\sum\limits_{k_1=-\omega}^{+\omega} ... \sum\limits_{k_n=-\omega}^{+\omega}  \ket{\delta_{\underline{k}}} \otimes \ket{\delta_{\underline{k}}} \otimes \bra{\delta_{\underline{k}}}$};
		\node [style=none, doubled] (6) at (-9.5, 0.75) {};
		\node [style=none, doubled] (7) at (-9.5, -0.75) {};
		\node [style=Xbwdot] (8) at (-11, 0) {};
		\node [style=Xbwdot] (9) at (8.25, 0) {};
	\end{pgfonlayer}
	\begin{pgfonlayer}{edgelayer}
		\draw [style=-] (9) to (0.center);
		\draw [style=-] (4.center) to (8);
		\draw [style=-, in=180, out=45] (8) to (6.center);
		\draw [style=-, in=180, out=-45] (8) to (7.center);
	\end{pgfonlayer}
\end{tikzpicture}
\end{equation} 
The labels of the position eigenstates can be endowed with the infinite abelian group structure of $(\starIntegersModPow{2\omega+1}{n},\oplus,\underline{0})$, and we can consider the linear extension of the group multiplication and unit:
\begin{equation}\label{particleLatticeMomentumFrobAlgebra}
\begin{tikzpicture}
	\begin{pgfonlayer}{nodelayer}
		\node [style=none, doubled] (0) at (10, 0) {};
		\node [style=none, doubled] (1) at (-8, 0) {$:=$};
		\node [style=tightlabelnode] (2) at (14.5, 0) {$\ket{\delta_{\underline{0}}}$};
		\node [style=none, doubled] (3) at (12, 0) {$:=$};
		\node [style=none] (4) at (-9.5, 0) {};
		\node [style=tightlabelnode] (5) at (-1, 0) {$\sum\limits_{k_1,h_1=-\omega}^{+\omega} ... \sum\limits_{k_n,h_n=-\omega}^{+\omega}  \ket{\delta_{\underline{k}\oplus \underline{h}}} \otimes \bra{\delta_{\underline{k}}} \otimes \bra{\delta_{\underline{h}}}$};
		\node [style=none, doubled] (6) at (-12.5, 0.75) {};
		\node [style=none, doubled] (7) at (-12.5, -0.75) {};
		\node [style=Zbwdot] (8) at (-11, 0) {};
		\node [style=Zbwdot] (9) at (8.25, 0) {};
	\end{pgfonlayer}
	\begin{pgfonlayer}{edgelayer}
		\draw [style=-] (9) to (0.center);
		\draw [style=-, in=135, out=0] (6.center) to (8);
		\draw [style=-, in=-135, out=0] (7.center) to (8);
		\draw [style=-] (8) to (4.center);
	\end{pgfonlayer}
\end{tikzpicture}
\end{equation} 
It is immediate to check that these two maps form, together with their adjoints, a unital quasi-special commutative $\dagger$-Frobenius algebra, with normalisation factor $(2 \omega + 1)^n$. Furthermore, a result of Ref. \cite{Gogioso2017b} guarantees that $(\hbox{\begin{tikzpicture} [scale=1.2,transform shape] 

\def\deltax{0.3} 
\def\deltay{0.5} 


\node [dot, fill=\Dcolour] (mult) at (0,0) {};

\end{tikzpicture}}\!\!,\hbox{\begin{tikzpicture} [scale=1.2,transform shape] 

\def\deltax{0.3} 
\def\deltay{0.5} 


\node [dot, fill=\Zbwcolour] (mult) at (0,0) {};

\end{tikzpicture}}\!\!)$ is a strongly complementary pair of unital $\dagger$-Frobenius algebras. 

The infinite abelian group $(\starIntegersMod{2 \omega + 1},\oplus,0)$ is defined in full detail in the Appendix, and has the interval of non-standard integers $\{-\omega,...,+\omega\}$ as its underlying set. Remarkably, it contains the standard integers $\integers$ as a subgroup: as a consequence, the group $(\starIntegersModPow{2\omega+1}{n},\oplus,\underline{0})$ is a legitimate non-standard extension of the translation group $\integers^n$ for the $n$-dimensional lattice. In a sense, we are seeing standard infinite lattices are actually being periodic, but circling around \inlineQuote{beyond standard infinity} rather than at some finite point.

A proof on the same lines of the one mentioned in the previous section shows that the following \textbf{momentum eigenstates} $\ket{\goodchi_{\underline{x}}}$, indexed by $\underline{x} \in \frac{1}{2\omega+1}\starIntegersModPow{2\omega+1}{n}$, are the $\hbox{\begin{tikzpicture} [scale=1.2,transform shape] 

\def\deltax{0.3} 
\def\deltay{0.5} 


\node [dot, fill=\Zbwcolour] (mult) at (0,0) {};

\end{tikzpicture}}\!\!$-classical states (they are orthogonal and have square norm $(2 \omega + 1)^n$):
\begin{equation}
	\ket{\goodchi_{\underline{x}}} := \sum_{k_1 = -\omega}^{+ \omega} ... \sum_{k_n = -\omega}^{+\omega} e^{-i 2 \pi \, \underline{k} \cdot \underline{x}} \ket{\delta_{\underline{k}}}
\end{equation}
From their formulation, it is immediately clear that these are exactly the plane waves, and as a consequence we can legitimately refer to $\hbox{\begin{tikzpicture} [scale=1.2,transform shape] 

\def\deltax{0.3} 
\def\deltay{0.5} 


\node [dot, fill=\Zbwcolour] (mult) at (0,0) {};

\end{tikzpicture}}\!\!$ as the \textbf{momentum observable} for the particle on the lattice. Similarly to the previous Section, we have a nice duality between the large-scale cut-off on positions in $\starIntegers^n$ (which are bounded in magnitude by $\sqrt{n}\omega$) and the small-scale cut-off on momenta in $\nonstd{\torusGroup{n}}$ (which are discretised onto a lattice of infinitesimal mesh $\frac{\hbar}{2\omega+1}$).

\newpage
\section{Particles in real space}
\label{section_reals}

Consider a quantum particle in $n$-dimensional real space, corresponding to the Hilbert space $\Ltwo{\reals^n}$. This is a separable space, but for obvious reasons it does not come with a choice of complete countable orthonormal basis of standard states which is invariant under the translation group $\reals^n$ of the underlying space. However, we can try and approximate one using a subdivision of $\reals^n$ with infinitesimal mesh, as done in Ref. \cite{Robinson1974} to model Riemann integration (as well as differentiation and a number of other fundamental constructions of calculus).

We fix two infinite natural numbers: an \textbf{infrared (IR) infinity} $\omega_{ir}$, which will govern the large scale limit of our approximation, and a \textbf{ultraviolet (UV) infinity} $\omega_{uv}$, which will govern the small scale limit. We require both to be odd, and we set $2 \omega + 1 := \omega_{uv}\omega_{ir}$. Consider the following internal lattice---isomorphic to $\starIntegersModPow{2\omega+1}{n}$, of which it inherits the group structure---in $n$-dim non-standard real space $\starReals^n$:
\begin{equation}
\dfrac{1}{\omega_{uv}} \starIntegersModPow{2\omega +1}{n} := \suchthat{\underline{p} \in \starReals^n}{ \underline{p} = \dfrac{\underline{k}}{\omega_{uv}},\, \underline{k} \in \starIntegersModPow{2\omega+1}{n} } \subset \starReals^n
\end{equation}
Note that the group $\reals^n$ is not a subgroup of $\frac{1}{\omega_{uv}}\starIntegersModPow{2\omega+1}{n}$, as the elements of the latter are non-standard rational numbers. However the former can be approximated by the latter in the following rigorous sense: there is a subgroup $\big(\frac{1}{\omega_{uv}}\starIntegersModPow{2\omega+1}{n}\big)_0 \triangleleft \frac{1}{\omega_{uv}}\starIntegersModPow{2\omega+1}{n}$ given by taking the near standard elements, and taking a quotient of this subgroup through the standard part function yields $\reals^n$ as the quotient group (see Appendix). We can interpret the group $\frac{1}{\omega_{uv}}\starIntegersModPow{2\omega+1}{n}$ as a lattice with infinitesimally fine mesh (specified by $\frac{1}{\omega_{uv}}$) which approximates $\reals^n$, which covers it entirely, and which circles around \inlineQuote{beyond standard infinity} (at a point specified by $\omega_{ir}$). Having seen this, we consider the following orthonormal family of non-standard functions $\nonstd{\reals^n} \rightarrow \starComplexs$, indexed by the points $\underline{p} \in \frac{1}{\omega_{uv}}\starIntegersModPow{2\omega+1}{n}$ of the lattice which we introduced above::
\begin{equation}
	\ket{\goodchi_{\underline{p}}} := \underline{x} \mapsto \frac{1}{\sqrt{\omega_{uv}}}e^{-i 2\pi \, (\underline{p}\cdot \underline{x})}
\end{equation} 
Just like the family of momentum eigenstates defined in Section \ref{section_box}, this family is orthonormal, and hence we can take it to define a $\big((2 \omega + 1)^n\big)$-dimensional object $\realsAlgebra{n} := (\nonstd{\Ltwo{\reals^n}},P_\realsAlgebra{n})$ of $\starHilbCategory$, by considering the following truncating projector:
\begin{equation}
P_\realsAlgebra{n} := \sum_{p_1=-\omega_{ir}}^{+\omega_{ir}} ... \sum_{p_n = -\omega_{ir}}^{+\omega_{ir}} \ket{\goodchi_{\underline{p}}}\bra{\goodchi_{\underline{p}}} \;  :\equiv \sum_{k_1=-\omega}^{+\omega} ... \sum_{k_n = -\omega}^{+\omega} \ket{\goodchi_{\underline{k}/\omega_{uv}}}\bra{\goodchi_{\underline{k}/\omega_{uv}}}
\end{equation}  
The shorthand notation $\sum_{p=-\omega_{ir}}^{\omega_{ir}} f(p) :\equiv \sum_{k = -\omega}^{\omega} f(k / \omega_{uv})$ adopted above for summation over elements of $\frac{1}{\omega_{uv}}\starIntegersModPow{2\omega+1}{n}$ will be used through the rest of this work. We take this object to be our model in $\starHilbCategory$ of an unbounded quantum particle in an $n$-dimensional real space, and we refer to the family $\ket{\goodchi_{\underline{p}}}_{p_1,...,p_n=-\omega_{ir}}^{+\omega_{ir}}$ as the \textbf{momentum eigenstates} for $\realsAlgebra{n}$. Pretty much by definition, this object comes with a unital special commutative $\dagger$-Frobenius algebra corresponding to the \textbf{momentum observable} for the unbounded particle:
\begin{equation}\label{particleUnboundedMomentumFrobAlgebra}
\begin{tikzpicture}
	\begin{pgfonlayer}{nodelayer}
		\node [style=none, doubled] (0) at (6.5, 0) {};
		\node [style=none, doubled] (1) at (-8, 0) {$:=$};
		\node [style=tightlabelnode] (2) at (14.5, 0) {$\sum\limits_{p_1=-\omega_{ir}}^{+\omega_{ir}} ... \sum\limits_{p_n=-\omega_{ir}}^{+\omega_{ir}}  \bra{\goodchi_{\underline{p}}}$};
		\node [style=none, doubled] (3) at (10.5, 0) {$:=$};
		\node [style=none] (4) at (-12.5, 0) {};
		\node [style=tightlabelnode] (5) at (-2, 0) {$\sum\limits_{p_1=-\omega_{ir}}^{+\omega_{ir}} ... \sum\limits_{p_n=-\omega_{ir}}^{+\omega_{ir}}  \ket{\goodchi_{\underline{p}}} \otimes \ket{\goodchi_{\underline{p}}} \otimes \bra{\goodchi_{\underline{p}}}$};
		\node [style=none, doubled] (6) at (-9.5, 0.75) {};
		\node [style=none, doubled] (7) at (-9.5, -0.75) {};
		\node [style=Zbwdot] (8) at (-11, 0) {};
		\node [style=Zbwdot] (9) at (8.25, 0) {};
	\end{pgfonlayer}
	\begin{pgfonlayer}{edgelayer}
		\draw [style=-] (9) to (0.center);
		\draw [style=-] (4.center) to (8);
		\draw [style=-, in=180, out=45] (8) to (6.center);
		\draw [style=-, in=180, out=-45] (8) to (7.center);
	\end{pgfonlayer}
\end{tikzpicture}
\end{equation} 
The labels of the momentum eigenstates come endowed with the infinite abelian group structure of $(\frac{1}{\omega_{uv}}\starIntegersModPow{2\omega+1}{n},\oplus,\underline{0})$, and we can consider the linear extension of the group multiplication and unit:
\begin{equation}\label{particleUnboundedPositionFrobAlgebra}
\begin{tikzpicture}
	\begin{pgfonlayer}{nodelayer}
		\node [style=none, doubled] (0) at (10, 0) {};
		\node [style=none, doubled] (1) at (-8, 0) {$:=$};
		\node [style=tightlabelnode] (2) at (14.5, 0) {$\ket{\goodchi_{\underline{0}}}$};
		\node [style=none, doubled] (3) at (12, 0) {$:=$};
		\node [style=none] (4) at (-9.5, 0) {};
		\node [style=tightlabelnode] (5) at (-1, 0) {$\sum\limits_{p_1,q_1=-\omega_{ir}}^{+\omega_{ir}} ... \sum\limits_{p_n,q_n=-\omega_{ir}}^{+\omega_{ir}}  \ket{\goodchi_{\underline{p}\oplus \underline{q}}} \otimes \bra{\goodchi_{\underline{p}}} \otimes \bra{\goodchi_{\underline{q}}}$};
		\node [style=none, doubled] (6) at (-12.5, 0.75) {};
		\node [style=none, doubled] (7) at (-12.5, -0.75) {};
		\node [style=Xbwdot] (8) at (-11, 0) {};
		\node [style=Xbwdot] (9) at (8.25, 0) {};
	\end{pgfonlayer}
	\begin{pgfonlayer}{edgelayer}
		\draw [style=-] (9) to (0.center);
		\draw [style=-, in=135, out=0] (6.center) to (8);
		\draw [style=-, in=-135, out=0] (7.center) to (8);
		\draw [style=-] (8) to (4.center);
	\end{pgfonlayer}
\end{tikzpicture}
\end{equation} 
It is immediate to check that these two maps form, together with their adjoints, a unital quasi-special commutative $\dagger$-Frobenius algebra, with normalisation factor $(2 \omega + 1)^n$. Furthermore, a result of Ref. \cite{Gogioso2017b} guarantees that $(\hbox{\begin{tikzpicture} [scale=1.2,transform shape] 

\def\deltax{0.3} 
\def\deltay{0.5} 


\node [dot, fill=\Dcolour] (mult) at (0,0) {};

\end{tikzpicture}}\!\!,\hbox{\begin{tikzpicture} [scale=1.2,transform shape] 

\def\deltax{0.3} 
\def\deltay{0.5} 


\node [dot, fill=\Zbwcolour] (mult) at (0,0) {};

\end{tikzpicture}}\!\!)$ is a strongly complementary pair of unital $\dagger$-Frobenius algebras.

The same proof from Section \ref{section_box} can be adapted to show that the following \textbf{position eigenstates}, indexed by $\underline{x} \in \frac{1}{\omega_{ir}}\starIntegersModPow{2\omega+1}{n}$, are the $\hbox{\begin{tikzpicture} [scale=1.2,transform shape] 

\def\deltax{0.3} 
\def\deltay{0.5} 


\node [dot, fill=\Dcolour] (mult) at (0,0) {};

\end{tikzpicture}}\!\!$-classical states (orthogonal and with square norm $(2 \omega + 1)^n$):
\begin{equation}
	\ket{\delta_{\underline{x}}} := \sum_{p_1 = -\omega_{ir}}^{+ \omega_{ir}} ... \sum_{p_n = -\omega_{ir}}^{+\omega_{ir}} \goodchi_{\underline{p}}(\underline{x})^\ast \ket{\goodchi_{\underline{p}}}
\end{equation}
Note once again the nice duality between the cut-offs of the momenta (which have magnitude bounded above by $\sqrt{n}\omega_{ir}\hbar$ and are discretised onto a lattice of infinitesimal mesh $\frac{\hbar}{\omega_{uv}}$) and the cut-offs of the positions (which have magnitude bounded above by $\sqrt{n}\omega_{uv}$ and are discretised onto a lattice of infinitesimal mesh $\frac{1}{\omega_{ir}}$). For all $\underline{x} \in \frac{1}{\omega_{ir}}\starIntegersModPow{2\omega+1}{n}$ and all standard smooth $f \in \Ltwo{\reals^{n}}$, the position eigenstates can easily be shown to satisfy the identity $\stdpart{\braket{\delta_{\underline{x}}}{f}} = f(\stdpart{\underline{x}})$: hence, the position eigenstates (as we defined them) behave exactly as expected from Dirac delta functions, and we can legitimately refer to $\hbox{\begin{tikzpicture} [scale=1.2,transform shape] 

\def\deltax{0.3} 
\def\deltay{0.5} 


\node [dot, fill=\Dcolour] (mult) at (0,0) {};

\end{tikzpicture}}\!\!$ as the \textbf{position observable} for the unbounded particle.  Furthermore, standard smooth functions span the entirety of $\Ltwo{\reals^n}$, and hence the delta functions defined above show that the subspace defined by the truncating projector $P_\realsAlgebra{n}$ spans the near-standard vectors in $\nonstd{\Ltwo{\reals^n}}$.

\paragraph {Weyl CCRs using diagrams in \texorpdfstring{$\starHilbCategory$}{Star Hilb}.}

As a sample application of our non-standard framework, we provide a diagrammatic proof of the Weyl Canonical Commutation Relations, using the $\starHilbCategory$ object $\realsAlgebra{n}$ which we have just constructed. For all $\underline{x} \in \reals^n$, let $U_{\underline{x}}$ be the unitary on $\Ltwo{\reals^n}$ corresponding to space-translation of wavefunctions by $\underline{x}$. For all $\underline{p} \in \reals^n$, let $V_{\underline{p}}$ be the unitary corresponding to momentum-boost by $\underline{p} \hbar$. Let $\underline{x}' \in \frac{1}{\omega_{ir}}\starIntegersModPow{2\omega+1}{n}$ and $\underline{p}' \in \frac{1}{\omega_{uv}}\starIntegersModPow{2\omega+1}{n}$ be such that $\stdpart{\underline{x}'} = \underline{x}$ and $\stdpart{\underline{p}'} = \underline{p}$. The unitaries $U_{\underline{x}}$ and $V_{\underline{p}}$ can be expressed diagrammatically as follows (because $\hbox{\begin{tikzpicture} [scale=1.2,transform shape] 

\def\deltax{0.3} 
\def\deltay{0.5} 


\node [dot, fill=\Zbwcolour] (mult) at (0,0) {};

\end{tikzpicture}}\!\!$ acts as space-translation on delta functions, and $\hbox{\begin{tikzpicture} [scale=1.2,transform shape] 

\def\deltax{0.3} 
\def\deltay{0.5} 


\node [dot, fill=\Dcolour] (mult) at (0,0) {};

\end{tikzpicture}}\!\!$ acts as momentum-boost on plane-waves):
\begin{equation}\label{WeylCCRUnitariesWavefunctionsRealSpace}
	\begin{tikzpicture}
	\begin{pgfonlayer}{nodelayer}
		\node [style=labelnode] (0) at (-14.25, 0) {$U_{\underline{x}}$};
		\node [style=labelnode] (1) at (3, 0) {$V_{\underline{p}}$};
		\node [style=none] (2) at (-6, 0.25) {};
		\node [style=state] (3) at (-7.5, -0.5) {$\delta_{\underline{x}'}$};
		\node [style=labelnode] (4) at (-3.75, 0.25) {};
		\node [style=labelnode] (5) at (-9, 0.25) {};
		\node [style=Zbwdot] (6) at (-6, 0.25) {};
		\node [style=labelnode] (7) at (-10.5, 0) {$\stdpartSym$};
		\node [style=none] (8) at (-3.75, 1) {};
		\node [style=none] (9) at (-3.75, -1) {};
		\node [style=none] (10) at (-9, -1) {};
		\node [style=none] (11) at (-9, 1) {};
		\node [style=labelnode] (12) at (-12.25, 0) {$=$};
		\node [style=labelnode] (13) at (8.25, 0.25) {};
		\node [style=none] (14) at (8.25, -1) {};
		\node [style=none] (15) at (13.25, 1) {};
		\node [style=none] (16) at (13.25, -1) {};
		\node [style=labelnode] (17) at (6.75, 0) {$\stdpartSym$};
		\node [style=labelnode] (18) at (13.25, 0.25) {};
		\node [style=Xbwdot] (19) at (11.25, 0.25) {};
		\node [style=labelnode] (20) at (5, 0) {$=$};
		\node [style=state] (21) at (9.75, -0.5) {$\goodchi_{\underline{p}'}$};
		\node [style=none] (22) at (8.25, 1) {};
	\end{pgfonlayer}
	\begin{pgfonlayer}{edgelayer}
		\draw [style=-, in=135, out=0] (5) to (6);
		\draw [in=-135, out=0] (3) to (6);
		\draw [style=-, in=180, out=0, looseness=0.75] (2.center) to (4);
		\draw [style=-, in=0, out=0, looseness=0.75] (8.center) to (9.center);
		\draw [style=-, in=180, out=180, looseness=0.75] (11.center) to (10.center);
		\draw [style=-, in=135, out=0] (13) to (19);
		\draw [style=-] (19) to (18);
		\draw [in=-135, out=0] (21) to (19);
		\draw [style=-, in=0, out=0, looseness=0.75] (15.center) to (16.center);
		\draw [style=-, in=180, out=180, looseness=0.75] (22.center) to (14.center);
	\end{pgfonlayer}
\end{tikzpicture}
\end{equation}
We can then deduce the Weyl Canonical Commutation relations for the position and momentum observables of a quantum particle in $n$-dimensional real space:
\begin{equation}\label{WeylCCRProofWavefunctionsRealSpace}
	\resizebox{\textwidth}{!}{\begin{tikzpicture}
	\begin{pgfonlayer}{nodelayer}
		\node [style=labelnode] (0) at (-17, 0) {$V_{\underline{p}} U_{\underline{x}}$};
		\node [style=labelnode] (1) at (20, 0) {$e^{i 2 \pi \underline{p} \cdot \underline{x}} \;U_{\underline{x}} V_{\underline{p}}$};
		\node [style=none] (2) at (-8.75, 0) {};
		\node [style=state] (3) at (8, -1) {$\goodchi_{\underline{p}'}$};
		\node [style=none] (4) at (9.5, 0) {};
		\node [style=state] (5) at (-7.25, -1) {$\goodchi_{\underline{p}'}$};
		\node [style=labelnode] (6) at (6.5, 0) {};
		\node [style=state] (7) at (-10.25, -1) {$\delta_{\underline{x}'}$};
		\node [style=state] (8) at (11, -1) {$\delta_{\underline{x}'}$};
		\node [style=labelnode] (9) at (-3.75, 0) {};
		\node [style=Zbwdot] (10) at (12.5, 0) {};
		\node [style=labelnode] (11) at (-1.25, 0) {$=$};
		\node [style=effect] (12) at (5, 0) {$\goodchi^\dagger_{-\underline{p}'}$};
		\node [style=Xbwdot] (13) at (-5.75, 0) {};
		\node [style=labelnode] (14) at (-11.75, 0) {};
		\node [style=labelnode] (15) at (14.5, 0) {};
		\node [style=state] (16) at (2.5, 0) {$\delta_{\underline{x}'}$};
		\node [style=Xbwdot] (17) at (9.5, 0) {};
		\node [style=Zbwdot] (18) at (-8.75, 0) {};
		\node [style=labelnode] (19) at (16.75, 0) {$=$};
		\node [style=labelnode] (20) at (-13.25, 0) {$\stdpartSym$};
		\node [style=none] (21) at (-3.75, 1.5) {};
		\node [style=none] (22) at (-3.75, -1.5) {};
		\node [style=none] (23) at (2, -1.5) {};
		\node [style=none] (24) at (2, 1.5) {};
		\node [style=none] (25) at (-11.75, -1.5) {};
		\node [style=none] (26) at (-11.75, 1.5) {};
		\node [style=none] (27) at (14.5, -1.5) {};
		\node [style=none] (28) at (14.5, 1.5) {};
		\node [style=labelnode] (29) at (0.5, 0) {$\stdpartSym$};
		\node [style=labelnode] (30) at (-15, 0) {$=$};
	\end{pgfonlayer}
	\begin{pgfonlayer}{edgelayer}
		\draw [style=-, in=135, out=0] (14) to (18);
		\draw [in=-135, out=0] (7) to (18);
		\draw [style=-, in=135, out=0] (2.center) to (13);
		\draw [style=-] (13) to (9);
		\draw [in=-135, out=0] (5) to (13);
		\draw [style=-, in=135, out=0] (6) to (17);
		\draw [in=-135, out=0] (3) to (17);
		\draw [style=-, in=135, out=0] (4.center) to (10);
		\draw [in=-135, out=0] (8) to (10);
		\draw [style=-] (10) to (15);
		\draw [style=-] (16) to (12);
		\draw [style=-, in=0, out=0, looseness=0.75] (21.center) to (22.center);
		\draw [style=-, in=180, out=180, looseness=0.75] (24.center) to (23.center);
		\draw [style=-, in=180, out=180, looseness=0.75] (26.center) to (25.center);
		\draw [style=-, in=0, out=0, looseness=0.75] (28.center) to (27.center);
	\end{pgfonlayer}
\end{tikzpicture}}
\end{equation}
The proof of the central equality can be carried out fully diagrammatically in the non-standard framework, using strong complementarity for $(\hbox{\begin{tikzpicture} [scale=1.2,transform shape] 

\def\deltax{0.3} 
\def\deltay{0.5} 


\node [dot, fill=\Dcolour] (mult) at (0,0) {};

\end{tikzpicture}}\!\!,\hbox{\begin{tikzpicture} [scale=1.2,transform shape] 

\def\deltax{0.3} 
\def\deltay{0.5} 


\node [dot, fill=\Zbwcolour] (mult) at (0,0) {};

\end{tikzpicture}}\!\!)$ together with the fact that delta functions are $\hbox{\begin{tikzpicture} [scale=1.2,transform shape] 

\def\deltax{0.3} 
\def\deltay{0.5} 


\node [dot, fill=\Dcolour] (mult) at (0,0) {};

\end{tikzpicture}}\!\!$-classical states and that plane-waves are $\hbox{\begin{tikzpicture} [scale=1.2,transform shape] 

\def\deltax{0.3} 
\def\deltay{0.5} 


\node [dot, fill=\Zbwcolour] (mult) at (0,0) {};

\end{tikzpicture}}\!\!$-classical states:\vspace{-3mm}
\begin{equation}\label{WeylCCRsProof1}
	\resizebox{0.8\textwidth}{!}{\begin{tikzpicture}
	\begin{pgfonlayer}{nodelayer}
		\node [style=Zbwdot] (0) at (-13, 0) {};
		\node [style=labelnode] (1) at (-16, 0) {};
		\node [style=state] (2) at (-14.5, -1) {$\delta_{\underline{x}'}$};
		\node [style=Xbwdot] (3) at (-10, 0) {};
		\node [style=none] (4) at (-13, 0) {};
		\node [style=labelnode] (5) at (-8, 0) {};
		\node [style=labelnode] (6) at (-6, 0) {$=$};
		\node [style=effect] (7) at (2.5, -1) {$\goodchi_{-\underline{p}'}^\dagger$};
		\node [style=labelnode] (8) at (-4, 0) {};
		\node [style=Zbwdot] (9) at (-1, 0) {};
		\node [style=Xbwdot] (10) at (1, 0) {};
		\node [style=state] (11) at (-2.5, -1) {$\delta_{\underline{x}'}$};
		\node [style=none] (12) at (-1, 0) {};
		\node [style=labelnode] (13) at (4, 0) {};
		\node [style=labelnode] (14) at (6, 0) {$=$};
		\node [style=Xbwdot] (15) at (11.5, -1) {};
		\node [style=Xbwdot] (16) at (11.5, 1) {};
		\node [style=labelnode] (17) at (8, 0) {};
		\node [style=state] (18) at (10, -1) {$\delta_{\underline{x}'}$};
		\node [style=Zbwdot] (19) at (13.5, 1) {};
		\node [style=Zbwdot] (20) at (13.5, -1) {};
		\node [style=labelnode] (21) at (17, 0) {};
		\node [style=effect] (22) at (15.5, -1) {$\goodchi_{-\underline{p}'}^\dagger$};
		\node [style=state] (23) at (-11.5, -1) {$\goodchi_{\underline{p}'}$};
	\end{pgfonlayer}
	\begin{pgfonlayer}{edgelayer}
		\draw [style=-, in=135, out=0] (1) to (0);
		\draw [in=-135, out=0] (2) to (0);
		\draw [style=-, in=135, out=0] (4.center) to (3);
		\draw [style=-] (3) to (5);
		\draw [style=-, in=135, out=0] (8) to (9);
		\draw [in=-135, out=0] (11) to (9);
		\draw [style=-, in=180, out=0, looseness=0.75] (12.center) to (10);
		\draw [style=-, in=180, out=60] (10) to (13);
		\draw [in=-45, out=180] (7) to (10);
		\draw [style=-, in=135, out=0] (17) to (16);
		\draw [style=-, in=180, out=45] (19) to (21);
		\draw [style=-] (18) to (15);
		\draw [style=-] (15) to (19);
		\draw [style=-] (15) to (20);
		\draw [style=-] (20) to (16);
		\draw [style=-] (16) to (19);
		\draw [style=-] (20) to (22);
		\draw [style=-, in=-135, out=0] (23) to (3);
	\end{pgfonlayer}
\end{tikzpicture}}
\end{equation}
\vspace{-3mm}
\begin{equation}\label{WeylCCRsProof2}
	\resizebox{\textwidth}{!}{\begin{tikzpicture}
	\begin{pgfonlayer}{nodelayer}
		\node [style=labelnode] (0) at (-8, 0) {$=$};
		\node [style=Xbwdot] (1) at (19, 0) {};
		\node [style=labelnode] (2) at (16, 0) {};
		\node [style=labelnode] (3) at (24, 0) {};
		\node [style=Zbwdot] (4) at (22, 0) {};
		\node [style=none] (5) at (19, 0) {};
		\node [style=state] (6) at (20.5, -1) {$\delta_{\underline{x}'}$};
		\node [style=labelnode] (7) at (10, 0) {$=$};
		\node [style=Xbwdot] (8) at (-15.5, -1) {};
		\node [style=effect] (9) at (-11.5, -1) {$\goodchi_{-\underline{p}'}^\dagger$};
		\node [style=Zbwdot] (10) at (-13.5, -1) {};
		\node [style=labelnode] (11) at (-19, 0) {};
		\node [style=state] (12) at (-17.5, -1) {$\delta_{\underline{x}'}$};
		\node [style=Zbwdot] (13) at (-13.5, 1) {};
		\node [style=Xbwdot] (14) at (-15.5, 1) {};
		\node [style=labelnode] (15) at (-10, 0) {};
		\node [style=state] (16) at (-5.75, 0) {$\delta_{\underline{x}'}$};
		\node [style=effect] (17) at (-3.25, 0) {$\goodchi_{-\underline{p}'}^\dagger$};
		\node [style=labelnode] (18) at (8, 0) {};
		\node [style=none] (19) at (0, 0) {};
		\node [style=Zbwdot] (20) at (6, 0) {};
		\node [style=state] (21) at (4.5, -1) {$\delta_{\underline{x}'}$};
		\node [style=labelnode] (22) at (-2, 0) {};
		\node [style=effect] (23) at (1.5, -1) {$\goodchi_{-\underline{p}'}^\dagger$};
		\node [style=Xbwdot] (24) at (0, 0) {};
		\node [style=state] (25) at (17.5, -1) {$\goodchi_{\underline{p}'}$};
		\node [style=effect] (26) at (14.75, 0) {$\goodchi_{-\underline{p}'}^\dagger$};
		\node [style=state] (27) at (12.25, 0) {$\delta_{\underline{x}'}$};
	\end{pgfonlayer}
	\begin{pgfonlayer}{edgelayer}
		\draw [style=-, in=135, out=0] (2) to (1);
		\draw [style=-, in=135, out=0] (5.center) to (4);
		\draw [in=-135, out=0] (6) to (4);
		\draw [style=-] (4) to (3);
		\draw [style=-, in=135, out=0] (11) to (14);
		\draw [style=-, in=180, out=45] (13) to (15);
		\draw [style=-] (12) to (8);
		\draw [style=-] (8) to (13);
		\draw [style=-] (8) to (10);
		\draw [style=-] (10) to (14);
		\draw [style=-] (14) to (13);
		\draw [style=-] (16) to (17);
		\draw [style=-, in=135, out=45] (19.center) to (20);
		\draw [in=-135, out=0] (21) to (20);
		\draw [style=-] (20) to (18);
		\draw [style=-, in=180, out=0, looseness=0.75] (22) to (24);
		\draw [in=-45, out=180] (23) to (24);
		\draw [style=-] (10) to (9);
		\draw [style=-] (27) to (26);
		\draw [style=-, in=-135, out=0] (25) to (5.center);
	\end{pgfonlayer}
\end{tikzpicture}}
\end{equation}

\newcommand{\encodingFunction}[1]{\operatorname{enc}_{#1}}
\newcommand{\decodingFunction}[1]{\operatorname{dec}_{#1}}
\section{Quantum fields on lattices}
\label{section_fieldLattice}

In Section \ref{section_reals}, we have seen that there are so many non-standard rational numbers that the standard reals can be approximated with ease by considering an infinite lattice with infinitesimal mesh. In this Section, we will see that there also are so many non-standard integers that quantum fields living on lattices, traditionally forming a non-separable Hilbert space, can be modelled in our framework.

If $V$ is a standard finite-dimensional Hilbert space with dimension greater than one, and $X$ is some countably infinite set, then it is well known that the space $V^{\otimes X}$ of $V$-valued quantum fields on $X$ is a non-separable standard Hilbert space: any orthonormal basis would have cardinality $(\dim{V})^X$, which is strictly larger than $\aleph_0$ whenever $\dim{V}>1$ and $X$ is infinite (we use the infinite direct product $V^{\otimes X}$ of von Neumann \cite{VonNeumann1939}). 
Our model begins with the following observation: if $D,m \in \naturals$ then we have $D^m \in \naturals$, and hence by Transfer Theorem if $D,\mu \in \starNaturals$ then we must have $D^\mu \in \starNaturals$. Fix $D,m \in \naturals^+$, and observe that the natural numbers between $1$ and $D^m$ can be constructively interpreted as strings of length $m$ with characters chosen in $\{1,...,D\}$. Denote the corresponding decoding/encoding functions as follows:
\begin{align}
	\decodingFunction{D,m}:\{1,...,D^m\} \rightarrow \{1,...,D\}^m \hspace{1.5cm}\encodingFunction{D,m}:\{1,...,D\}^m \rightarrow \{1,...,D^m\}
\end{align}
By Transfer Theorem, we obtain a pair of corresponding decoding/encoding functions $\decodingFunction{D,\mu}$ and $\encodingFunction{D,\mu}$ for each pair of positive non-standard naturals $D,\mu \in \starNaturals^+$; we will use underlined letters $\underline{s}$ to denote strings seen as functions $\underline{s} \in \{1,...,D\}^\mu$, and undecorated letters $s$ to denote the corresponding encodings of strings as numbers $s \in \{1,...,D^\mu\}$. 

Take some orthonormal family $\ket{e_{\underline{s}}} \in |\SpaceH|$ of vectors in some non-standard Hilbert space $|\SpaceH|$, and a family $\psi_{\underline{s}} \in \starComplexs$ of non-standard complex numbers, both indexed by the internal set of all strings $\underline{s} \in \{1,...,D\}^\mu$ for some positive non-standard naturals $D,\mu \in \starNaturals^+$. By using the decoding function, we can always construct a vector $\ket{\psi}$ of $|\SpaceH|$ as follows:
\begin{equation}
\ket{\psi} := \sum_{s=1}^{D^\mu} \psi_{\decodingFunction{D,\mu}(s)} \ket{e_{_{\decodingFunction{D,\mu}(s)}}}
\end{equation}
Now consider an object $\SpaceH:=(|\SpaceH|,P_\SpaceH)$ of $\starHilbCategory$, with $P_\SpaceH := \sum_{d=1}^{\dim{\SpaceH}} \ket{e_d}\bra{e_d}$ for some orthonormal family $\ket{e_d}_{d=1}^{\dim{\SpaceH}}$: we wish to construct an object $\SpaceH^{\otimes \starIntegersModPow{2\omega+1}{n}}$ corresponding to a $\SpaceH$-valued quantum field living on the lattice $\starIntegersModPow{2\omega+1}{n}$. Define the shorthands $D := \dim{\SpaceH}$ and $\mu := (2 \omega + 1)^n$, and consider the following orthonormal family $\ket{e_{\underline{s}}}$ of non-standard states in $|\SpaceH|^{\otimes \mu}$ (a non-standard Hilbert space which exists by Transfer Theorem) indexed by strings $\underline{s} \in \{1,...,D\}^{\mu}$:
\begin{equation}\label{eqn_latticeFieldBasis}
\ket{e_{\underline{s}}} := \bigotimes_{k_1 = -\omega}^{+\omega} ... \bigotimes_{k_n=-\omega}^{+ \omega} \ket{e_{\underline{s}(\underline{k})}}
\end{equation}
We introduced the following shorthand to access the characters of the indexing strings:
\begin{equation}
\underline{s}(\underline{k}) :\equiv  \;\;\underline{s}\Big( \encodingFunction{2\omega+1,n}\big(k_1+\omega+1,...,k_n+\omega+1\big) \Big) \in \{1,...,D\} 
\end{equation}
We have chosen to introduce the shorthand above because $\underline{s}$ is technically $\underline{s}: \{1,...,\mu\} \rightarrow \{1,...,D\}$, but it is more physically significant to treat it as $\underline{s}:\{-\omega,...,+\omega\}^n \rightarrow \{1,...,D\}$, since we are working in the context of $n$-dimensional lattices.

Traditionally, this procedure would result in an uncountable family, which cannot be adequately summed in the standard framework. In our non-standard framework, on the other hand, we can use the encoding/decoding trick to sum it and define a legitimate truncating projector:
\begin{equation}\label{truncatingProjectorLatticeField}
P_{ \SpaceH^{\otimes\starIntegersModPow{2\omega+1}{n}}} := \sum_{s = 1}^{D^\mu} \ket{e_{\decodingFunction{D,\mu}(s)}}\bra{e_{\decodingFunction{D,\mu}(s)}} 
\end{equation}
We take the corresponding object $\SpaceH^{\otimes\starIntegersModPow{2\omega+1}{n}} := (|\SpaceH|^{\otimes\mu}, P_{ \SpaceH^{\otimes\starIntegersModPow{2\omega+1}{n}}})$ of $\starHilbCategory$ to model $\SpaceH$-valued quantum fields on the lattice $\starIntegersModPow{2\omega+1}{n}$ within our framework. As a matter of convenience, we will adopt a more slender notation for sums over strings, leaving the decoding step $s \mapsto \underline{s} := \decodingFunction{D,\mu}(s)$ implicit:
\begin{equation}
P_{ \SpaceH^{\otimes\starIntegersModPow{2\omega+1}{n}}} = \sum_{\underline{s}} \ket{e_{\underline{s}}}\bra{e_{\underline{s}}} 
\end{equation} 

Now consider a decomposition $P_\SpaceH := \sum_{d=1}^{D}\ket{\phi_{d}^{(\underline{k})}}\bra{\phi_{d}^{(\underline{k})}}$ of the truncating projector $P_\SpaceH$ for the quantum system $\SpaceH$ in terms of orthonormal families specified at each point $\underline{k}$ of the lattice, and let  $\big(\hbox{\begin{tikzpicture} [scale=1.2,transform shape] 

\def\deltax{0.3} 
\def\deltay{0.5} 


\node [dot, fill=\Zbwcolour] (mult) at (0,0) {};

\end{tikzpicture}}\!\!^{(\underline{k})}\big)_{\underline{k} \in \starIntegersModPow{2\omega+1}{n}}\!$ be the associated non-degenerate observable on $\SpaceH$. Then the truncating projector $ P_{ \SpaceH^{\otimes\starIntegersModPow{2\omega+1}{n}}}$ for the quantum field is correspondingly decomposed as $P_{ \SpaceH^{\otimes\starIntegersModPow{2\omega+1}{n}}} := \sum_{\underline{s}} \ket{\phi_{\underline{s}}}\bra{\phi_{\underline{s}}}$, where $\ket{\phi_{\underline{s}}} := \bigotimes_{k_1 = -\omega}^{+\omega} ... \bigotimes_{k_n=-\omega}^{+ \omega} \ket{\phi^{(\underline{k})}_{\underline{s}(\underline{k})}}$. The associated non-degenerate observable for the quantum field is given by the following unital special commutative $\dagger$-Frobenius algebra:
\begin{equation}\label{fieldLatticeFrobAlgebra}
\begin{tikzpicture}
	\begin{pgfonlayer}{nodelayer}
		\node [style=none, doubled] (0) at (3.5, 0) {};
		\node [style=none, doubled] (1) at (-9.5, 0) {$:=$};
		\node [style=tightlabelnode] (2) at (10,-0.25) {$\sum\limits_{\underline{s}}  \bra{\phi_{\underline{s}}}$};
		\node [style=none, doubled] (3) at (7.5, 0) {$:=$};
		\node [style=none] (4) at (-14, 0) {};
		\node [style=tightlabelnode] (5) at (-4.75, -0.25) {$\sum\limits_{\underline{s}} \ket{\phi_{\underline{s}}} \otimes \ket{\phi_{\underline{s}}} \otimes \bra{\phi_{\underline{s}}}$};
		\node [style=none, doubled] (6) at (-11, 0.75) {};
		\node [style=none, doubled] (7) at (-11, -0.75) {};
		\node [style=Zbwdot] (8) at (-12.5, 0) {};
		\node [style=Zbwdot] (9) at (5.25, 0) {};
	\end{pgfonlayer}
	\begin{pgfonlayer}{edgelayer}
		\draw [style=-] (9) to (0.center);
		\draw [style=-] (4.center) to (8);
		\draw [style=-, in=180, out=45] (8) to (6.center);
		\draw [style=-, in=180, out=-45] (8) to (7.center);
	\end{pgfonlayer}
\end{tikzpicture}
\end{equation}

In Sections \ref{section_box} and \ref{section_lattice}, we constructed objects $\torusAlgebra{n}$ and $\integersAlgebra{n}$ having $\nonstd{\Ltwo{\torusGroup{n}}}$ and $\nonstd{\Ltwo{\integers^n}}$ as their underlying Hilbert spaces, and we considered truncating projectors obtained from the extension of a standard orthonormal basis. In Section \ref{section_reals}, we constructed an object $\realsAlgebra{n}$ having $\nonstd{\Ltwo{\reals^{n}}}$ as its underlying Hilbert space, but we constructed the truncating projector using a genuinely non-standard orthonormal basis. In all three cases, the underlying Hilbert space is the non-standard extension $\nonstd{V}$ of a separable standard Hilbert space $V$, and the connection to the traditional quantum mechanical formalism is guaranteed by Theorem~\ref{thm_FundamentalTheorem}. Unfortunately, Theorem \ref{thm_FundamentalTheorem} is not applicable in this Section: in order to establish a connection between $\SpaceH^{\otimes\starIntegersModPow{2\omega+1}{n}}$ and the traditional model for quantum fields on lattices, we will formulate a suitable universal property for $\SpaceH^{\otimes\starIntegersModPow{2\omega+1}{n}}$. In the remainder of this Section, we will assume that $|\SpaceH| = \nonstd{V}$ for some standard separable Hilbert space $V$.

Consider the infinite direct sum of standard Hilbert spaces $\prod_{\underline{k} \in \integers^n} V$. If $W$ is another standard Hilbert space, we say that a function $\tilde{f}: \big(\prod_{\underline{k} \in \integers^n} V\big) \rightarrow W$ is \textbf{multilinear} if for each $\underline{k} \in \integers^n$ and each ${u}: \integers^n\backslash\{\underline{k}\} \rightarrow V$ the function $\restrict{\tilde{f}}{\underline{k},{u}}: V \rightarrow W$ defined by $\restrict{\tilde{f}}{\underline{k},{u}} (v) := \tilde{f}({u}\cup\{\underline{k} \mapsto v\})$ is linear. Now consider the non-standard extension $\prod_{\underline{k} \in \starIntegers^n}\!\! \nonstd{V}$ of the direct sum, and let $\prod_{\underline{k} \in \starIntegersModPow{2\omega+1}{n}}\!\! \SpaceH$ be the object of $\starHilbCategory$ obtained by restricting non-standard internal maps $\nonstd{\varphi}:\starIntegers^n\! \rightarrow\!\! \nonstd{V}$ to non-standard internal maps $\nonstd{\varphi}:\starIntegersModPow{2\omega+1}{n}\! \rightarrow\!\! \nonstd{V}$, extending the projector $P_\SpaceH$ to the direct sum $\prod_{\underline{k} \in \starIntegersModPow{2\omega+1}{n}}\!\!\!\!\!\! \nonstd{V}$ by pointwise action. Multilinear maps $\tilde{F}: \big(\prod_{\underline{k} \in \starIntegersModPow{2\omega+1}{n}}\!\! \SpaceH\big) \rightarrow \SpaceK$ are defined analogously to the standard case.

\begin{theorem}[\textbf{Universal property for the infinite tensor product $\SpaceH^{\otimes\starIntegersModPow{2\omega+1}{n}}$}]\label{thm_latticeFieldsUniversalProperty}\hfill\\
Let $|\SpaceH| = \nonstd{V}$, and define $\theta: \big(\prod_{\underline{k} \in \starIntegersModPow{2\omega+1}{n}} \SpaceH \big) \rightarrow \SpaceH^{\otimes\starIntegersModPow{2\omega+1}{n}}$ to be the following multilinear map:
\begin{equation}
\theta \big(\,\underline{k} \mapsto \sum_{d=1}^D v_d(\underline{k})\, \ket{e_d} \,\big) := \sum_{\underline{s}} \Big(\; \bigotimes_{k_1 = -\omega}^{+\omega} ... \bigotimes_{k_n=-\omega}^{+ \omega}  v_{\underline{s}(\underline{k})}(\underline{k})\, \ket{e_{\underline{s}(\underline{k})}} \;\Big)
\end{equation}
Then for every object $\SpaceK$ with $|\SpaceK| = \nonstd{W}$ and every multilinear map $\tilde{F}: \big(\prod_{\underline{k} \in \starIntegersModPow{2\omega+1}{n}}\SpaceH\big) \rightarrow \SpaceK$ there is a unique linear map $F:  \SpaceH^{\otimes\starIntegersModPow{2\omega+1}{n}} \rightarrow \SpaceK$ such that the following diagram commutes:
\begin{equation}\label{latticeFieldsUniversalProperty}
\begin{tikzpicture}
	\begin{pgfonlayer}{nodelayer}
		\node [style=tightlabelnode] (0) at (-3.5, 3) {$\prod\limits_{\underline{k} \in \starIntegersModPow{2\omega+1}{n}}\!\!\!\!\SpaceH$};
		\node [style=labelnode] (1) at (3, 3) {$\SpaceH^{\otimes\starIntegersModPow{2 \omega + 1}{n}}$};
		\node [style=labelnode] (2) at (3, -1) {$\SpaceK$};
		\node [style=none] (3) at (0, 3.5) {$\theta$};
		\node [style=none] (4) at (3.75, 1) {$F$};
		\node [style=none] (5) at (-0.5, 0) {$\tilde{F}$};
	\end{pgfonlayer}
	\begin{pgfonlayer}{edgelayer}
		\draw [style=->] (0) to (1);
		\draw [style=->] (1) to (2);
		\draw [style=->] (0) to (2);
	\end{pgfonlayer}
\end{tikzpicture}
\end{equation}
\end{theorem}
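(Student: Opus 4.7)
The plan is to reduce the claim to the internal (non-standard) universal property of the tensor power $|\SpaceH|^{\otimes \mu}$, where $\mu := (2\omega+1)^n \in \starNaturals$, and then unwind the result to the external formulation in terms of $\starHilbCategory$ morphisms. First I would verify that $\theta$ is multilinear in the sense defined just above the theorem statement: fixing a coordinate $\underline{k}$ and all components at other points, linearity in the $\underline{k}$-th slot follows directly from the formula, since the ordinary tensor product is multilinear in each slot and the sum over strings commutes with that linearity.

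Next I would build $F$ by prescribing its action on the orthonormal family $\{\ket{e_{\underline{s}}}\}_{\underline{s}}$ from (\ref{eqn_latticeFieldBasis}) via
\begin{equation*}
F\big(\ket{e_{\underline{s}}}\big) \,:=\, \tilde{F}\big(\,\underline{k} \mapsto \ket{e_{\underline{s}(\underline{k})}}\,\big),
\end{equation*}
and then extending by internal $\starComplexs$-linearity. That such an extension exists and is unique is exactly the internal universal property of finite tensor powers: applying Transfer Theorem to the standard statement ``specifying values on an orthonormal basis of $V^{\otimes m}$ uniquely determines a linear map $V^{\otimes m} \to W$'', with $D^\mu$-many basis vectors indexed by internal strings, one obtains a well-defined internal linear map. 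Composing with the truncating projector (\ref{truncatingProjectorLatticeField}) on the domain and $P_\SpaceK$ on the codomain then yields a legitimate morphism of $\starHilbCategory$.

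To verify $F \circ \theta = \tilde{F}$, I would expand $\theta$ on a generic input using the displayed formula in the theorem statement and apply $F$ termwise: the sum over strings produces a combination of basis tensors weighted by products $\prod_{\underline{k}} v_{\underline{s}(\underline{k})}(\underline{k})$, which $F$ sends to $\tilde{F}\big(\underline{k} \mapsto \ket{e_{\underline{s}(\underline{k})}}\big)$, and internal multilinearity of $\tilde{F}$ then re-collapses the weighted sum into $\tilde{F}\big(\underline{k} \mapsto \sum_d v_d(\underline{k})\ket{e_d}\big)$. Uniqueness is immediate: $\theta$ already hits every basis vector $\ket{e_{\underline{s}}}$ (take $v_d(\underline{k}) = 1$ when $d = \underline{s}(\underline{k})$ and $0$ otherwise), so any candidate $F'$ satisfying $F' \circ \theta = \tilde{F}$ must agree with $F$ on an internally spanning family and hence everywhere by internal $\starComplexs$-linearity.

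The main obstacle is justifying the recipe when the number of tensor positions is infinite from the standard viewpoint: multilinearity on a countably infinite product genuinely fails to pin down maps on the non-separable standard space $V^{\otimes X}$. What rescues us is precisely that $\mu = (2\omega+1)^n$ is an \emph{internal} natural, so by Transfer the finite-dimensional story for $V^{\otimes m}$ with $m \in \naturals$ applies verbatim, with both the $D^\mu$-many basis vectors and the notion of multilinearity at $\mu$ positions being internally finite. Care must be taken to stipulate that $\tilde{F}$ is an internal non-standard map for Transfer to apply, but this is already the standing convention for morphisms in $\starHilbCategory$, since they are internal linear maps sandwiched between truncating projectors.
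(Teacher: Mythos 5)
Your proposal is correct and follows essentially the same route as the paper: define $F$ on the orthonormal family $\ket{e_{\underline{s}}}$ by $F(\ket{e_{\underline{s}}}) := \tilde{F}(\underline{k}\mapsto\ket{e_{\underline{s}(\underline{k})}})$, extend by internal linearity, and derive both commutativity and uniqueness from the Transfer-justified expansion of an internal multilinear map over the hyperfinitely many strings $\underline{s}$ (the paper's Equation (\ref{nonstandardMultilinearity})), together with the observation that $\theta$ sends the basis-valued functions to the basis tensors. Your closing remark about internality of $\tilde{F}$ being needed for Transfer matches the implicit convention the paper relies on, so there is nothing substantive to add.
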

\begin{proof}
The map $F$ can be defined on the standard orthonormal basis of $\SpaceH^{\otimes\starIntegersModPow{2\omega+1}{n}}$ as follows: 
\begin{equation}
F \Big( \bigotimes_{k_1 = -\omega}^{+\omega} ... \bigotimes_{k_n=-\omega}^{+ \omega}\ket{e_{\underline{s}(\underline{k})}} \Big) := \tilde{F}  \Big( \underline{k} \mapsto \ket{e_{\underline{s}(\underline{k})}} \Big)
\end{equation}
Commutativity of Diagram \ref{latticeFieldsUniversalProperty} and uniqueness of $F$ are both consequences of the following equation for $\tilde{F}$, together with the observation that $\theta \, \big( \underline{k} \mapsto \ket{e_{\underline{s}(\underline{k})}} \big) = \bigotimes_{k_1 = -\omega}^{+\omega} ... \bigotimes_{k_n=-\omega}^{+ \omega}\ket{e_{\underline{s}(\underline{k})}}$:
\begin{equation}\label{nonstandardMultilinearity}
\tilde{F}\Big(\,\underline{k} \mapsto \sum_{d=1}^D v_d(\underline{k}) \,\ket{e_d} \,\Big)\;\; = \;\; \sum_{\underline{s}}\Big( \tilde{F}\big(\, \underline{k} \mapsto \ket{e_{\underline{s}(\underline{k})}} \,\big) \!\!\!\prod_{\underline{k} \in \starIntegersModPow{2\omega+1}{n}}\!\!v_{\underline{s}(\underline{k})}(\underline{k}) \Big)
\end{equation}
Equation \ref{nonstandardMultilinearity} itself is a consequence of Transfer Theorem, because the corresponding standard statement is valid for all standard multilinear $\tilde{f}: \big(\prod_{\underline{k} \in \{-m,...,+m\}^n} V\big) \rightarrow W$, for any choice of $m \in \naturals$.
\end{proof}

\section{Quantum fields in real space}
\label{section_QFT}
In Section \ref{section_reals}, we have seen that there are so many non-standard rational numbers that the standard reals can be approximated with ease by considering an infinite lattice with infinitesimal mesh. In Section \ref{section_fieldLattice}, we have seen that there are so many non-standard integers that quantum fields on lattices can be modelled in our framework. In this Section, we will put both tricks together to show that, in fact, quantum fields on unbounded real space can also be modelled in our framework (a similar argument applies to tori).

Consider again an object $\SpaceH:=(|\SpaceH|,P_\SpaceH)$ of $\starHilbCategory$, with $P_\SpaceH := \sum_{d=1}^{D} \ket{e_d}\bra{e_d}$ for some orthonormal family $\ket{e_d}_{d=1}^{D}$. We fix two odd infinite natural numbers $\omega_{uv},\omega_{ir} \in \starNaturals$, and let $2\omega+1:= \omega_{uv} \omega_{ir}$. We wish to construct the system $\SpaceH^{\otimes \frac{1}{\omega_{uv}}\starIntegersModPow{2\omega+1}{n}}$ corresponding to a $\SpaceH$-valued quantum field living on the lattice $\frac{1}{\omega_{uv}}\starIntegersModPow{2\omega+1}{n}$ in $\starReals^n$, which we have seen in Section \ref{section_reals} to approximate the real space $\reals^n$ to infinitesimal mesh~$\frac{1}{\omega_{uv}}$ (and all the way up to some infinity $\omega_{ir}$, where the lattice \inlineQuote{circles around}). 

As objects of $\starHilbCategory$, the space $\SpaceH^{\otimes\starIntegersModPow{2\omega+1}{n}}$ we constructed in the previous Section and the space $\SpaceH^{\otimes \frac{1}{\omega_{uv}}\starIntegersModPow{2\omega+1}{n}}$ we wish to construct in this Section are isomorphic: to see this, it suffices to consider the isomorphism of abelian groups sending $\underline{k} \in \starIntegersModPow{2\omega+1}{n}$ to $\underline{p} := \frac{1}{\omega_{uv}} \underline{k} \in \frac{1}{\omega_{uv}}\starIntegersModPow{2\omega+1}{n}$. As a consequence, the only formal distinction is that we will write our summations over $\underline{p}$ rather than $\underline{k}$, i.e. we will use the same shorthand $\sum_{p=-\omega_{ir}}^{+\omega_{ir}} f(p) :\equiv \sum_{k=-\omega}^{+\omega} f(k/\omega_{uv})$ that we introduced in Section \ref{section_reals}.

When it comes to quantum field theory, however, the two objects $\SpaceH^{\otimes\starIntegersModPow{2\omega+1}{n}}$ and $\SpaceH^{\otimes \frac{1}{\omega_{uv}}\starIntegersModPow{2\omega+1}{n}}$ have very different interpretations, corresponding to the different ways in which the underlying lattice is immersed into $n$-dimensional non-standard real space $\nonstd{\reals^n}$ (and successively related to standard real space~$\reals^n$). In the previous Section, the underlying lattice was embedded as the lattice $\starIntegersModPow{2\omega+1}{n}$ of standard mesh $1$ in $\nonstd{\reals^n}$, while in this section the underlying lattice is embedded as the lattice $\frac{1}{\omega_{uv}}\starIntegersModPow{2\omega+1}{n}$of infinitesimal mesh $\frac{1}{\omega_{uv}}$. From the perspective of non-standard reals, the two lattices are equivalent (rescaling by a factor of $\omega_{uv}\in\starReals$), but from the perspective of standard reals they are extremely different: restricting to near-standard reals and quotienting by infinitesimal equivalence sends $\starIntegersModPow{2\omega+1}{n}$ to the lattice $\integers^n$, while $\frac{1}{\omega_{uv}}\starIntegersModPow{2\omega+1}{n}$ covers the entirety of $\reals^n$. We now provide justification for the interpretation of  $\SpaceH^{\otimes\frac{1}{\omega_{uv}}\starIntegersModPow{2\omega+1}{n}}$ as a space of quantum fields on real space.

Just as we did in the previous Section, we will assume that $|\SpaceH| = \nonstd{V}$ for separable standard $V$. Consider the direct integral of Hilbert spaces $\int_{\reals^n}^{\oplus} \,V \,d\underline{p}$ (in the sense of von Neumann \cite{Neumann1949}), i.e. the space of square-integrable functions $\varphi: \reals^n \rightarrow V$, together with the inner product $\braket{\psi}{\varphi} := \int_{\reals^n} \,\braket{\psi(\underline{p})}{\varphi(\underline{p})} \, d \underline{p}$. Consider its non-standard extension, and let $\prod_{\underline{p} \in \frac{1}{\omega_{uv}}\starIntegersModPow{2\omega+1}{n}}\!\!\!\!\! \SpaceH$ be the object obtained by restricting non-standard internal maps $\nonstd{\varphi}: \starReals^n \rightarrow V$ to non-standard internal maps $\nonstd{\varphi}: \frac{1}{\omega_{uv}}\starIntegersModPow{2\omega+1}{n}  \rightarrow V$, and appropriately extending the projector $P_\SpaceH$ to the direct sum $\prod_{\underline{p} \in \frac{1}{\omega_{uv}}\starIntegersModPow{2\omega+1}{n}}\!\!\!\!\!\!\nonstd{V}$. 

The universal property for the infinite tensor product $\SpaceH^{\otimes\frac{1}{\omega_{uv}}\starIntegersModPow{2\omega+1}{n}}$ takes a form similar to that of Theorem \ref{thm_latticeFieldsUniversalProperty}, and has a similar proof: one only needs to use $\prod_{\underline{p} \in \frac{1}{\omega_{uv}}\starIntegersModPow{2\omega+1}{n}} \SpaceH$ in place of the original $\prod_{\underline{k} \in \starIntegersModPow{2\omega+1}{n}} \SpaceH$. The non-trivial part is the connection between $\prod_{\underline{p} \in \frac{1}{\omega_{uv}}\starIntegersModPow{2\omega+1}{n}} \SpaceH$ and the direct integral $\int_{\reals^n}^{\oplus} \,V \,d\underline{p}$. In the previous Section, the connection between $\prod_{\underline{k} \in \starIntegersModPow{2\omega+1}{n}} \SpaceH$ and the direct sum  $\prod_{\underline{k} \in \integers^n} V$ was immediate: a function $\varphi: \integers^n \rightarrow V$ extends to a function $\nonstd{\varphi}: \starIntegers^n \rightarrow \nonstd{V}$, which then restricts to $\nonstd{\varphi}: \starIntegersModPow{2\omega+1}{n} \rightarrow \nonstd{V}$ and then back to $\varphi: \integers^n \rightarrow V$. In this Section, we note that $\int_{\reals^n}^{\oplus} \,V \,d\underline{p}$ is the $L^2$ completion of the space of continuous square-integrable functions $\varphi: \reals^n \rightarrow V$, so it is enough to show that these can be reconstructed from the corresponding $\nonstd{\varphi}: \frac{1}{\omega_{uv}}\starIntegersModPow{2\omega+1}{n} \rightarrow \nonstd{V}$. But this is indeed the case: if $\varphi$ is continuous then we have that $\stdpart{x} = \stdpart{y}$ implies $\stdpart{\nonstd{\varphi}(x)} = \stdpart{\nonstd{\varphi}(y)}$, and in particular we can obtain $\varphi$ back from $\nonstd{\varphi}$ in a unique way by setting $\varphi(z) := \stdpart{\nonstd{\varphi}(x)}$ for any $x \in \frac{1}{\omega_{uv}}\starIntegersModPow{2\omega+1}{n}$ such that $\stdpart{x} = z$.

\section{Towards Quantum Field Theory in Categorical Quantum Mechanics}
\label{section_QFTexplained}

A legitimate question to ask at this point is: How does Quantum Field Theory fit into the framework we described? Why are we talking about \inlineQuote{quantum fields} in the context of certain infinite tensor products? 

As part of canonical quantisation, classical fields from the Lagrangian formalism are translated into certain operator-valued distributions, also known as \textit{field operators}, acting upon quantum states living in a Fock space. Using the field operators, the classical Lagrangian can be translated into the dynamics and interactions of the quantum field theory, so it is no surprise that they occupy the vast majority of the literature dedicated to the subject. 

It is worth noting, however, that the field operators play a very different role from the classical fields that they originally quantised: classical fields \textit{are} states of a classical system, while field operators \textit{act upon} states of a quantum system (e.g. the vacuum). In this sense, the closest correspondents in quantum field theory to the fields of classical field theory or the wavefunctions of quantum mechanics are, in fact, the quantum states in the Fock space. Just as $\complexs^2$ is the space of quantum states for a qubit, so the Fock space is the space of quantum states for a quantum field. And just as we freely refer to the object $\complexs^2$ as a qubit, so we take the liberty to refer to the Fock space as a \textbf{quantum field}. We will use the term \textbf{field operator} when talking about the operator-valued distributions obtained by canonical quantisation.

Let's consider the textbook example of the real scalar field, a relativistic classical field $\phi(\underline{x},t)$ satisfying the \textbf{Klein-Gordon equation}:
\begin{equation}
	\partial_\mu \partial^\mu \phi + m^2 \phi = 0
\end{equation}
When looking at the field in momentum space $\phi(\underline{p},t)$, the Klein-Gordon equation becomes:
\begin{equation}
	\Big(\frac{\partial^2}{\partial t} + (|p|^2+m^2)\Big) \phi(\underline{p},t) = 0
\end{equation}
Hence a momentum space solution $\phi(\underline{p},t)$ to the Klein-Gordon equation can be thought of as a field of simple harmonic oscillators, each oscillator vibrating with its own amplitude and at a frequency given by $\nu_{\underline{p}} := \sqrt{|\underline{p}|^2+m^2}$ for each point $\underline{p} \in \reals^3$ of momentum space. In order to quantise the real scalar field $\phi$, we simply need to quantise the simple harmonic oscillators. We do so in our non-standard framework. 

Consider the object $\SpaceH$ of $\starHilbCategory$ defined as follows, where $\tau$ is some infinite non-standard natural and $\ket{n}_{n \in \starNaturals}$ is the standard orthonormal basis for $\nonstd{\Ltwo{\naturals}}$: 
\begin{equation}
	\SpaceH := \Big(\nonstd{\Ltwo{\naturals}}, \sum_{n=0}^{\tau} \ket{n}\bra{n}\Big)
\end{equation}
We will think of $\SpaceH$ as the non-standard counterpart for a quantum harmonic oscillator: the states $\ket{n}$ correspond to energy eigenstates for the oscillator, and we extended our range of energy values all the way up to some infinite natural $\tau$. We define the \textbf{ladder operators} $a$ and $a^\dagger$ on $\SpaceH$ as follows:
\begin{equation}
	a \ket{n} =
		\begin{cases}
			0 &\text{ if } n = 0 \\
			\sqrt{n} \ket{n-1} &\text{ otherwise}
		\end{cases}
	\hspace{3cm} 
	a^\dagger \ket{n} = 
		\begin{cases}
			0 &\text{ if } n = \tau \\
			\sqrt{n+1} \ket{n+1} &\text{ otherwise}
		\end{cases}
\end{equation}
It is easy to check that these operators satisfy the usual canonical commutation relations, up to a correction factor accounting for the truncation of energy above the infinite $\tau$:
\begin{equation}
	[a,a^\dagger] = \id{\SpaceH} - (\tau+1)\ket{\tau}\bra{\tau}
\end{equation}
When restricting ourselves to finite energy states, these operators are exactly the ladder operators for the quantum harmonic oscillator. We then proceed to define the \textbf{number operator} $N:=a^\dagger a$, and we obtain the usual property and commutators for it (no correction this time):
\begin{equation}
	N \ket{n} = n \ket{n} \hspace{3cm} [N,a^\dagger] = a^\dagger \hspace{3cm} [N,a] = -a
\end{equation}
The number operator is associated to a $\dagger$-SCFA on $\SpaceH$, the \textbf{number observable}, with $\ket{n}$ as classical states. For a quantum harmonic oscillator of frequency $\nu$, the Hamiltonian can finally be defined as:
\begin{equation}
	H := \hbar \nu N
\end{equation}
Aside perhaps for the correction term in the canonical commutation relation, this is exactly what we would expect the non-standard version of the quantum harmonic oscillator to look like, and the traditional quantum harmonic oscillator is recovered exactly by restricting to states of finite energy.

We saw before that a solution to the Klein-Gordon can be interpreted to describe a field of simple harmonic oscillators at each point $\underline{p} \in \reals^3$ of momentum space, vibrating independently with frequencies given by the expression $\nu_{\underline{p}} = \sqrt{|\underline{p}|^2 + m^2}$. The natural quantisation of such a scenario involves considering independent quantum harmonic oscillators at each point $\underline{p} \in \reals^3$ of momentum space, i.e. an infinite direct product of separable Hilbert spaces over the 3-dimensional continuum. Because such a space would be mathematically unwieldy, and because only finite energy states are deemed to be physically interesting, the infinite direct product of quantum harmonic oscillators is never constructed, and the Fock space is considered instead. The Fock space is the Hilbert space of joint states for the quantum harmonic oscillators which is spanned by those separable states involving only finitely many oscillators not in their ground state: the state $\ket{n}$ for the oscillator at point $\underline{p} \in \reals^3$ is considered to count the number of quantum particles with definite momentum $\underline{p}$, and the Fock space is spanned by all states containing finitely many particles.

Within our non-standard framework, we don't have to worry about infinite tensor products, and we don't have to restrict ourselves to finite energy states or finite number of particles: as a consequence, we quantise the real scalar field $\phi$ by constructing the field of quantum harmonic oscillators in all its glory. This can be done by considering the space  $\SpaceH^{\otimes \frac{1}{\omega_{uv}}\starIntegersModPow{2\omega+1}{3}}$ defined in Section \ref{section_QFT} above: we discretise momentum space to an infinite lattice $\frac{1}{\omega_{uv}}\starIntegersModPow{2\omega+1}{3}$ of infinitesimal mesh $1/\omega_{uv}$, and we place an independent quantum harmonic oscillator $\SpaceH$ at each point of the lattice (with varying frequency $\nu_{\underline{p}}$). 

For each $\underline{p} \in  \frac{1}{\omega_{uv}}\starIntegersModPow{2\omega+1}{3}$, we write $a_{\underline{p}}$ and $a^\dagger_{\underline{p}}$ for the ladder operators acting on the quantum Harmonic oscillator at $\underline{p}$ (tensored with the identity on all other oscillators), and $\ket{n,\underline{p}}_{n=0}^\tau$ for the orthonormal basis of the oscillator at $\underline{p}$. We define the rescaled versions $a(\underline{p}):=\sqrt{\omega_{uv}^3}a_{\underline{p}}$ and $a^\dagger(\underline{p}):=\sqrt{\omega_{uv}^3}a^\dagger_{\underline{p}}$, which satisfy the commutation relations $[a(\underline{p}),a(\underline{q})] = [a^\dagger(\underline{p}),a^\dagger(\underline{q})] = 0$ and:
\begin{equation}
	[a(\underline{p}),a^\dagger(\underline{q})] = 
	\begin{cases}
		\omega_{uv}^3 \big(\id{}-(\nu+1)\ket{\nu,\underline{p}}\bra{\nu,\underline{p}}\big) & \text{ if } \underline{p}=\underline{q}\\
		0 & \text{ otherwise}
	\end{cases}
\end{equation}
The usual field operators $\pi(\underline{x})$ and $\phi(\underline{x})$ can be defined from $a(\underline{p})$ and $a^\dagger(\underline{p})$ through the following discretised integral, for all points $\underline{x} \in \frac{1}{\omega_{ir}}\starIntegersModPow{2\omega+1}{3}$ in space:
\begin{align}
	\phi(\underline{x}) &:= \sum_{\underline{p}} \frac{1}{\omega_{uv}^3} \frac{1}{\sqrt{\nu_{\underline{p}}}} \Big[ a(\underline{p})e^{i2\pi \underline{p}\cdot \underline{x}} + a^\dagger(\underline{p})e^{-i2\pi \underline{p}\cdot \underline{x}} \Big] \nonumber \\
	\pi(\underline{x}) &:= \sum_{\underline{p}} \frac{1}{\omega_{uv}^3} (-i)\frac{\sqrt{\nu_{\underline{p}}}}{2} \Big[ a(\underline{p})e^{i2\pi \underline{p}\cdot \underline{x}} - a^\dagger(\underline{p})e^{-i2\pi \underline{p}\cdot \underline{x}} \Big] 
\end{align}
The field operators satisfy commutation relations similar to the ones of the rescaled ladder operators, as would be expected. For every $\textbf{n}:\frac{1}{\omega_{uv}}\starIntegersModPow{2\omega+1}{3} \rightarrow \{0,...,\tau\}$, we can define the state $\ket{\textbf{n}} := \otimes_{\underline{p}} \ket{\textbf{n}(\underline{p}),\underline{p}}$. Then the discretised integral of the (rescaled) number observables for all quantum harmonic oscillators at all points $\underline{p} \in \frac{1}{\omega_{uv}}\starIntegersModPow{2\omega+1}{3}$ of momentum space gives rise to the \textbf{number operator} $N$ on $\SpaceH^{\otimes \frac{1}{\omega_{uv}}\starIntegersModPow{2\omega+1}{3}}$:
\begin{equation}
	N  := \sum_{\underline{p}} \frac{1}{\omega_{uv}^3} a^\dagger({\underline{p}}) a({\underline{p}}) = \sum_{\underline{p}} a^\dagger_{\underline{p}}a_{\underline{p}} =\sum_{\textbf{n}}  \Big(\sum_{\underline{p}}\textbf{n}(\underline{p})\Big) \ket{\textbf{n}}\bra{\textbf{n}}
\end{equation}
The \textbf{Hamiltonian} for the quantum field is similarly obtained as a discretised integral:
\begin{equation}
	H :=  \sum_{\underline{p}} \frac{1}{\omega_{uv}^3} \hbar  \nu_{\underline{p}} a^\dagger({\underline{p}}) a({\underline{p}}) = \sum_{\underline{p}} \hbar \nu_{\underline{p}} a^\dagger_{\underline{p}} a_{\underline{p}} =\sum_{\textbf{n}}  \Big(\sum_{\underline{p}} \hbar \nu_{\underline{p}}\textbf{n}(\underline{p})\Big) \ket{\textbf{n}}\bra{\textbf{n}}
\end{equation}
The traditional Fock space is recovered by considering the states $\ket{\textbf{n}}$ with finite energy $\bra{\textbf{n}} H \ket{\textbf{n}}$ (i.e. those with a finite number of particles, all having finite momenta). The corresponding number of particles at a standard point $\underline{q} \in \reals^3$ of standard momentum space, which we will denote by $\stdpart{\textbf{n}}(\underline{q})$, is then given by the following expression:
\begin{equation}
	\stdpart{\textbf{n}}(\underline{q}) := \hspace{-3mm} \sum_{\substack{\underline{p} \in  \frac{1}{\omega_{uv}}\starIntegersModPow{2\omega+1}{3} \\\text{such that } \stdpart{\underline{p}}= \underline{q}}} \hspace{-3mm} \textbf{n}(\underline{p})	
\end{equation}
The further development of traditional QFT machinery within our framework is left to future work.

\newpage
\section{Conclusions and Future Work}
\label{section_conclusions}

\vspace{-2mm}

In the first section of this work, we have presented a more mature formulation of the category $\starHilbCategory$, refining and expanding the original definition from Ref. \cite{Gogioso2016b} in a number of ways. Firstly, the new definition is no longer restricted to standard Hilbert spaces and non-standard extensions of standard orthonormal bases, but instead allows all kinds of non-standard Hilbert spaces and orthonormal families. Secondly, the new definition is basis-independent and has a neater categorical presentation as a full sub-category of the Karoubi envelope for the category of non-standard Hilbert spaces. Thirdly, objects are no longer self-dual, and compact closure is now formulated in a basis-independent way, analogous to the one used in $\fdHilbCategory$. Backward compatibility is guaranteed by the fact that the category $\starHilbCategory$ originally defined in Ref. \cite{Gogioso2016b} is equivalent to a full-subcategory of the category $\starHilbCategory$ redefined in this work.

Our new definition allowed us to push the framework beyond its original limitations, and the bulk of this work was dedicated to the explicit constructions of five families of infinite-dimensional quantum systems that are of interest to the practising quantum theorist. In Sections \ref{section_box} and \ref{section_lattice} we have presented the quantum systems for particles in boxes with periodic boundary conditions and particles on lattices: both constructions were already within reach of the original definition of $\starHilbCategory$, and the special case of a particle in a one-dimensional box with periodic boundary conditions was already explored in Ref. \cite{Gogioso2016b}. The first real application of our extended $\starHilbCategory$ category has come in Section \ref{section_reals}, where it was put to work in presenting the quantum system for particles in $\reals^n$. Key to this construction have been the use of a truly non-standard orthonormal basis (i.e. not the extension of a standard one), together with an approximation of $\reals^n$ achieved by using a non-standard lattice of infinitesimal mesh in $\starReals^n$.

We have also seen that our extended definition allows for the treatment of certain cases of interest in quantum field theory. Thanks to a key observation about exponentials of infinite natural numbers---which are themselves infinite natural numbers by Transfer Theorem---and exploiting the freedom to work with non-separable spaces, we have constructed in Section \ref{section_fieldLattice} a quantum system suitable for the treatment of quantum fields on a cubic lattice $\integers^n$. Finally, in Section \ref{section_QFT} we have combined the ideas of Sections \ref{section_reals} and~\ref{section_fieldLattice} to construct a quantum system suitable for the treatment of quantum fields in $\reals^n$, and in Section \ref{section_QFTexplained} we have provided a first direct link to the traditional quantum field theoretic framework.

This work is a significant development of the original Ref. \cite{Gogioso2016b}, and provides a solid basis for the application of algebraic and diagrammatic methods from CQM to infinite-dimensional quantum mechanics and quantum field theory. From here, we foresee a number of interesting further developments and applications, some of which are briefly detailed below.

\vspace{-3mm}

\paragraph{Future work.} To begin with, an extension of Theorem \ref{thm_FundamentalTheorem} to the entirety of $\starHilbCategory$ should be a priority in future developments, as it would establish a uniformly tight relationship between our framework and more traditional approaches to quantum mechanics and quantum field theory. In the same spirit of relating to mainstream works, we endeavour to explicitly construct more quantum systems of widespread interest---such as wavefunctions/fields over locally compact groups---and to explore more sophisticated applications to quantum field theory and quantum gravity (e.g. constructing analogues of algebraic quantum field theory and introducing Feynman diagrams).

On a different note, we believe that it would be extremely interesting to analyse the natural infinite-dimensional extension of a number of quantum protocols already formalised in CQM. Examples might include simple protocols---such as quantum teleportation and quantum key distribution---or more elaborate protocols---such as the generalised Mermin-type non-locality arguments of Ref. \cite{Gogioso2017c}, and infinite-dimensional extensions of the work on tight reference-frame-independent quantum teleportation of Ref. \cite{Verdon2016}. To kick-start this line of research, an application of $\starHilbCategory$ to the Hidden Subgroup Problem for the infinite group $\integers^n$ has already appeared in Ref. \cite{Gogioso2017b}, based on the original $\starHilbCategory$ from Ref. \cite{Gogioso2016b}.

\newpage
\subparagraph*{Acknowledgements.}
The authors would like to thank Samson Abramsky, Bob Coecke, David Reutter and Masanao Ozawa for comments, suggestions and useful discussions, as well as Sukrita Chatterji and Nicol\`o Chiappori for their support. SG gratefully acknowledges funding from EPSRC and the Williams Scholarship offered by Trinity College. FG gratefully acknowledges funding from the AFSOR grant ``Algorithmic and logical aspects when composing meaning''.

\bibliographystyle{eptcs}

\appendix

\section{The non-standard cyclic group \texorpdfstring{$\starIntegersMod{2 \omega+1}$}{of integers modulo infinities}}

The abelian group $\starIntegersMod{2 \omega+1}$ is defined to be the internal set of non-standard integers $\{-\omega,...,+\omega\}$ endowed with $0$ as unit and with the following binary operation $\oplus$ as group multiplication:
\begin{small}
\begin{equation}
k \oplus h := 
\begin{cases}
	k+h & \text{ if } -\omega \leq k+h \leq +\omega \\
	k+h-(2\omega+1) & \text{ if } +\omega < k+h \\
	k+h+(2\omega+1) & \text{ if }  k+h < -\omega 
\end{cases}
\end{equation}
\end{small}
The group $\starIntegersMod{2 \omega + 1}$ has the integers as a subgroup: if $k,h \in \integers$ are standard integers, then certainly $-\omega \leq k+h \leq +\omega$, and hence $k \oplus h = k+h$. More in general, the group $(\starIntegersModPow{2\omega+1}{n},\oplus,\underline{0})$ contains $\integers^n$ as a subgroup, and as a consequence it is a legitimate non-standard extension of the translation group $\integers^n$ of an $n$-dimensional lattice. Furthermore, the group of automorphisms of $\starIntegersModPow{2\omega+1}{n}$ contains the group of automorphisms of $\integers^n$ as a subgroup (rotations and reflections about the origin are the same, but there are more translations of $\starIntegersModPow{2\omega+1}{n}$ than there are of $\integers^n$).

It is not hard to show that $\starIntegersMod{2 \omega + 1} \isom \integers \times C$ for some dense abelian group $C$. As the elements of $C$, we take exactly one representative from each full copy of $\integers$ in $\{-\omega,+\omega\}$, plus a single element representing both the final segment $+\omega-\naturals$ and the initial segment $-\omega+\naturals$. For each full copy of $\integers$, we take the representative to be the zero element of that copy, and in particular we let $0_C := 0$ (the zero element of the standard integers). Furthermore, we imagine the final segment $+\omega - \naturals$ and the initial segment $-\omega + \naturals$ as glued together to form a single copy of $\integers$, and without loss of generality we pick the representative to be $-\omega$ (so that $-\omega$ is the zero element for that that virtual copy of the integers). Given these considerations, we can always decompose $k \in \starIntegersMod{2 \omega +1}$ uniquely as $(k',\theta_k)$ in terms of a standard integer component $k' \in \integers$ and a representative $\theta_k \in C$:
\begin{small}
\begin{equation}
k' = 
\begin{cases}
	k-\theta_k & \text{ if } \theta_k \neq -\omega\\
	k+\omega & \text{ if } \theta_k = -\omega \text{ and } k \in -\omega + \naturals \\
	k-(\omega+1) & \text{ if } \theta_k = -\omega \text{ and } k \in +\omega - \naturals \\
\end{cases}
\end{equation}
\end{small}
The standard integer components can be added independently of the representatives in $C$ (with some care taken for the boundary case of $\theta_k = -\omega$), so that this defines a group isomorphism $\starIntegersMod{2 \omega+1} \isom \integers \times C$. Also, recall that the infinite positive and negative integers form two dense, uncountable sets, each having no maximum or minimum, and with the finite integers $\integers$ in between \cite{Robinson1974}: as a consequence, the group $C$ is dense and uncountable (but, unlike the non-standard integers, it is not totally ordered).


\noindent There are three different embeddings of the periodic non-standard cubic lattice $\starIntegersModPow{2 \omega + 1}{n}$ that are of interest in this work, reflecting distinct applications to the modelling of cubic lattices $\integers^n$, the approximation of real space $\reals^n$, and the approximation of real toroidal space $\torusGroup{n}$:
\begin{enumerate}
	\item[(i)] the embedding as the lattice $\starIntegersModPow{2 \omega + 1}{n}$ in $\starReals^n$, where we send $\underline{k} \in \starIntegersModPow{2 \omega + 1}{n}$ to $\underline{k} \in \starReals^n$ (not a subgroup);
	\item[(ii)] the embedding as the lattice $\frac{1}{\omega_{uv}}\starIntegersModPow{2 \omega + 1}{n}$ in $\starReals^n$, where $\omega := \omega_{uv}\omega_{ir}$ for some infinite $\omega_{uv},\omega_{ir} \in \starNaturals^+$ and we send $\underline{k} \in\starIntegersModPow{2 \omega + 1}{n}$ to $\underline{p}:= \underline{k}/\omega_{uv} \in \starReals^n$ (also not a subgroup);
	\item[(iii)] the embedding as the subgroup $\frac{1}{2\omega+1}\starIntegersModPow{2 \omega + 1}{n}$ in $\nonstd{\torusGroup{n}}$, where we send $\underline{k} \in \starIntegersModPow{2 \omega + 1}{n}$ to $\frac{1}{2\omega+1}\underline{k} \in \nonstd{\torusGroup{n}}$.
\end{enumerate}
The first embedding uses $\starIntegersModPow{2 \omega + 1}{n}$ to approximate~$\integers^n$, under the observation that the latter is a subgroup of the former. The second embedding instead uses $\starIntegersModPow{2 \omega + 1}{n}$ to approximate $\reals^n$: this is a bit more complicated, as $\reals^n$ cannot be seen as a subgroup of $\starIntegersModPow{2 \omega + 1}{n}$ (the latter is discrete, while the former is dense). However, we can consider the subgroup of $\starIntegersModPow{2 \omega + 1}{n}$ given by those $\underline{k}$ such that $\underline{p} := \underline{k}/\omega_{uv}$ is a near-standard vector in $\starReals^n$, and we can quotient it by infinitesimal equivalence of vectors to obtain the group $\reals^n$. Hence the second embedding can be seen to approximate  $\reals^n$ by using a non-standard lattice of infinitesimal mesh, and working up to infinitesimal equivalence. The third embedding is used similarly to the second embedding, but to approximate $\torusGroup{n}$ instead of $\reals^n$ (with a quotient group homomorphism $\frac{1}{2\omega+1}\starIntegersModPow{2 \omega + 1}{n} \epim \torusGroup{n}$).

It should be noted that $\frac{1}{\omega_{uv}}\starIntegersModPow{2\omega+1}{n}$ is a lattice, and as such it does not enjoy the same symmetries of the continuum $\reals^n$: finite translations can be approximated up to infinitesimals, but rotations cannot. This is in contrast to the $\starIntegersModPow{2\omega+1}{n}$ case, the automorphisms of which contain the automorphisms of $\integers^n$ as a subgroup. When working with quantum systems, however, we are not really interested in the symmetries $\Phi$ of $\reals^n$, but rather in the unitary automorphisms $U_\Phi$ of $\Ltwo{\reals^n}$ that they induce: because the subspace defined by the truncating projector $P_\realsAlgebra{n}$ spans the near-standard vectors, all these unitaries lift from $\Ltwo{\reals^n}$ to $\realsAlgebra{n}$. From the point of view of the non-standard quantum system $\realsAlgebra{n}$, it is \inlineQuote{as if} $\frac{1}{\omega_{uv}}\starIntegersModPow{2\omega+1}{n}$ really possessed all the symmetries of $\reals^n$.

\vspace{-2mm}

\section{Delta functions, plane waves and position/momentum cut-offs}

\vspace{-0.5mm}

We present here the proof that the position eigenstates $\ket{\delta_{\underline{x}}}$ defined in Section \ref{section_box} are the classical states\footnote{Here we only give the proof for the copy condition of $\hbox{\begin{tikzpicture} [scale=1.2,transform shape] 

\def\deltax{0.3} 
\def\deltay{0.5} 


\node [dot, fill=\Dcolour] (mult) at (0,0) {};

\end{tikzpicture}}\!\!$-classical states, but the proofs for the delete and transpose conditions defining $\hbox{\begin{tikzpicture} [scale=1.2,transform shape] 

\def\deltax{0.3} 
\def\deltay{0.5} 


\node [dot, fill=\Dcolour] (mult) at (0,0) {};

\end{tikzpicture}}\!\!$-classical states follow similar lines.} for the group algebra $\hbox{\begin{tikzpicture} [scale=1.2,transform shape] 

\def\deltax{0.3} 
\def\deltay{0.5} 


\node [dot, fill=\Dcolour] (mult) at (0,0) {};

\end{tikzpicture}}\!\!$ of $\starIntegersModPow{2\omega+1}{n}$, exactly when $\underline{x} \in \frac{1}{2\omega+1}\starIntegersModPow{2\omega+1}{n}$:
\begin{align}
	\!\hbox{\begin{tikzpicture} [scale=1.2,transform shape,rotate=-90] 

\def\deltax{0.3} 
\def\deltay{0.5} 


\node (mult_label_outl) at (-\deltax,+\deltay) {};
\node (mult_label_outr) at (+\deltax,+\deltay) {};
\node [dot, fill=\Dcolour] (mult) at (0,0) {};
\node (mult_label_in) at (0,-\deltay) {};
\draw[-] [in=270,out=135] (mult) to (mult_label_outl);
\draw[-] [in=270,out=45] (mult) to (mult_label_outr);
\draw[-] (mult_label_in) to (mult);

\end{tikzpicture}}\!\! \circ \big( \ket{\delta_{\underline{x}}}\big) &= \sum_{\underline{n}} \sum_{\underline{k}} \ket{\goodchi_{\underline{k}}}\otimes \ket{\goodchi_{\underline{n}\ominus\underline{k}}}\; \braket{\goodchi_{\underline{n}}}{\delta_{\underline{x}}}  
	= \sum_{\underline{n}} \sum_{\underline{k}} \ket{\goodchi_{\underline{k}}}\otimes \ket{\goodchi_{\underline{n}\ominus\underline{k}}}\; \goodchi_{\underline{n}}(\underline{x})^\ast  \nonumber \\
	\hspace{2cm}&= \sum_{\underline{n}} \sum_{\underline{k}} \ket{\goodchi_{\underline{k}}} \otimes\ket{\goodchi_{\underline{n}\ominus\underline{k}}}\; \goodchi_{\underline{k}}(\underline{x})^\ast \goodchi_{\underline{n}\ominus\underline{k}}(\underline{x})^\ast e^{i2\pi (2\omega+1)\underline{s}\cdot\underline{x} }  \nonumber \\
	&= \Big[\sum_{\underline{n'}} \goodchi_{\underline{n'}}(\underline{x})^\ast \ket{\goodchi_{\underline{n'}}}\Big] \otimes \Big[\sum_{\underline{k}}  \goodchi_{\underline{k}}(\underline{x})^\ast\ket{\goodchi_{\underline{k}}}\Big]=  \ket{\delta_{\underline{x}}} \otimes  \ket{\delta_{\underline{x}}}.
\end{align}
In the third line, the extra phase $e^{i2\pi (2\omega+1) \underline{s}\cdot\underline{x}}$ appears because $\goodchi_{\underline{k}}$ is a character of $\integers^n$, not of $\starIntegersModPow{2\omega+1}{n}$: the value of $\underline{s} \in \{-1,0,+1\}^n$ keeps track of whether some modular reductions were necessary to go from $\underline{k} \oplus (\underline{n}\ominus \underline{k})$ to $\underline{n}$. It is cancelled out if and only if we require $\underline{x}$ to be in the form $\underline{x} = \underline{j} \frac{1}{2\omega+1}$, for some $\underline{j} \in \starIntegersModPow{2\omega+1}{n}$. Hence the duality between the large-scale cut-off for momenta and the small-scale cut-off for positions, a well-understood phenomenon in quantum mechanics, arises as a consequence of a purely algebraic requirement in our non-standard framework. Similar phases appear for the setups of Section \ref{section_lattice} and Section \ref{section_reals}, with similar large/small-scale dualities following from the algebraic requirement of copiability for classical states.

\end{document}